\definecolor{Red}{rgb}{1,0,0}
\definecolor{Blue}{rgb}{0,0,1}
\definecolor{Olive}{rgb}{0.41,0.55,0.13}
\definecolor{Green}{rgb}{0,1,0}
\definecolor{MGreen}{rgb}{0,0.8,0}
\definecolor{DGreen}{rgb}{0,0.55,0}
\definecolor{Yellow}{rgb}{1,1,0}
\definecolor{Cyan}{rgb}{0,1,1}
\definecolor{Magenta}{rgb}{1,0,1}
\definecolor{Orange}{rgb}{1,.5,0}
\definecolor{Violet}{rgb}{.5,0,.5}
\definecolor{Purple}{rgb}{.75,0,.25}
\definecolor{Brown}{rgb}{.75,.5,.25}
\definecolor{Grey}{rgb}{.5,.5,.5}
\DeclareMathOperator{\supp}{supp}
\theoremstyle{plain}
\newtheorem{theorem}{Theorem}
\newtheorem{corollary}{Corollary}[theorem]
\newtheorem{lemma}{Lemma}
\newtheorem{theorem*}{Theorem}   
\newtheorem{lemma*}{Lemma} 
\newtheorem{corollary*}{Corollary} 
\newtheorem*{remark*}{Remark}
\newtheorem{example}{Example}
\newtheorem{remark}{Remark}[section]
\newlength{\widebarargwidth}
\newlength{\widebarargheight}
\newlength{\widebarargdepth}
\theoremstyle{definition}
\newtheorem{definition}{Definition}
\def\cG{{\cal G}}
\def\cK{{\cal K}}
\def\cP{{\cal P}}
\newcommand{\real}{\ensuremath{\mathbb{R}}}
\newcommand{\defn}{\ensuremath{:  =}}
\DeclarePairedDelimiter{\abs}{\lvert}{\rvert}
\DeclarePairedDelimiter{\norm}{\lVert}{\rVert}
\DeclareMathOperator{\vol}{Vol}
\begin{document}

\begin{center}

	{\bf{\LARGE{Dual Loomis-Whitney inequalities via information theory}}}

	\vspace*{.25in}

	\begin{tabular}{ccc}
		{\large{Jing Hao}} & \hspace*{.75in} & {\large{Varun Jog}} \\ 
		{\large{\texttt{jing.hao@wisc.edu}}} & & {\large{\texttt{vjog@wisc.edu}}} \vspace{.2in}
		\\
		Department of Mathematics & \hspace{.2in} & Department of Electrical \& Computer Engineering \\
		University of Wisconsin - Madison && University of Wisconsin - Madison \\ 
	\end{tabular}

	\vspace*{.2in}

	January 2019

	\vspace*{.2in}

\end{center}

\abstract{We establish lower bounds on the volume and the surface area of a geometric body using the size of its slices along different directions. In the first part of the paper, we derive volume bounds for convex bodies using generalized subadditivity properties of entropy  combined with entropy bounds for log-concave random variables. In the second part, we investigate a new notion of Fisher information which we call the $L_1$-Fisher information, and show that certain superadditivity properties of the $L_1$-Fisher information lead to lower bounds for the surface areas of polyconvex sets in terms of its slices.}

\section{Introduction}

Tomography concerns reconstructing a probability density by synthesizing data collected along sections (or slices) of that density, and is a problem of great significance in applied mathematics. Some popular applications of tomography in the field of medical imaging are computed tomography (CT), magnetic resonance imaging (MRI), and positron emission tomography (PET). In each of these, sectional data is obtained in a non-invasive manner using penetrating waves, and images are generated using tomographic reconstruction algorithms. \emph{Geometric tomography} is a term coined by Gardner \cite{Gar95} to describe an area of mathematics that deals with the retrieval of information about a geometric object from data about its sections, projections, or both. Gardner notes that the term \emph{geometric} is deliberately vague, since it may be used to describe study convex sets or polytopes as well as more general shapes such as star-shaped bodies, compact sets, or even Borel sets. 

An important problem in geometric tomography is estimating the size of set using lower dimensional sections or projections. Here, projection of a geometric object refers to its shadow, or orthogonal projection, as opposed to the marginal of a probability density. As detailed in Campi and Gronchi \cite{CamGro11}, this problem is relevant in a variety settings ranging from the microscopic study of biological tissues \cite{WulEtal04, WulEtal04b}, to the study of fluid inclusions in minerals \cite{SheEtal85, BakDia06}, and to reconstructing the shapes of celestial bodies \cite{ConOst84, OstCon84}. Various geometric inequalities provide bounds on the sizes of sets using lower dimensional data pertaining to projections and slices of sets. The ``size" of a set often refers to its volume, but it may also refer to more general geometric properties such as surface area or mean width. A canonical example of an inequality that bounds the volume of set using its orthogonal projections is the Loomis-Whitney inequality \cite{LooWhi49}. This inequality states that 
for any Borel measurable set $K \subseteq \real^n$,
\begin{equation}\label{eq: lw}
V_n(K) \leq \left(\prod_{i=1}^n V_{n-1}(P_{e_i^\perp} K)\right)^{\frac{1}{n-1}}.
\end{equation}
Equality holds in \eqref{eq: lw} if and only if $K$ is a box with sides parallel to the coordinate axes. The Loomis-Whitney inequality has been generalized and strengthened in a number of ways. Burago and Zalgaller \cite{BurZal13} proved a version \eqref{eq: lw} that considers projections of $K$ on to all $m$-dimensional spaces spanned by $\{e_1, \dots, e_n\}$. Bollobas and Thomason \cite{BolTho95} proved the Box Theorem which states that for every  Borel set $K \subseteq \real^n$, there exists a box $B$ such that $V_n(B) = V_n(K)$ and $V_m(P_S B) \leq V_m(P_S K)$ for every $m$-dimensional coordinate subspace $S$. Ball \cite{Ball91} showed that the Loomis-Whitney inequality is closely related to the Brascamp-Lieb inequality \cite{Ball01, BenEtal08} from functional analysis, and generalized it to projections along subspaces that satisfy a certain condition. Inequality \eqref{eq: lw} also has deep connections to additive combinatorics and information theory.  Some of these connections have been explored in Balister and Bollobas \cite{BalBol12} and Gyarmati et al. \cite{GyaEtal10}, and Madiman and Tetali \cite{MadTet10}. 

A number of geometric inequalities also provide upper bounds for the surface area of a set using projections. Naturally, it is necessary to make some assumptions for such results, since one can easily conjure sets that have small projections while having a large surface area.  Betke and McMullen \cite{BetMul83, CamGro11} proved that for compact convex bodies, 
\begin{equation}\label{eq: lw2}
V_{n-1}(\partial K) \leq 2 \sum_{i=1}^n V_{n-1} (P_{e_i^\perp} K).
\end{equation}
Motivated by inequalities \eqref{eq: lw} and \eqref{eq: lw2}, Campi and Gronchi \cite{CamGro11} investigated upper bounds for \emph{intrinsic volumes} \cite{Sch14} of compact convex sets.

Inequalities \eqref{eq: lw} and \eqref{eq: lw2} provide upper bounds, and a natural question of interest is developing analogous lower bounds. Lower bounds are obtained via \emph{reverse} Loomis-Whitney inequalities or \emph{dual} Loomis-Whitney inequalities. The former uses projection information whereas the latter uses slice information, often along the coordinate axes. A canonical example of a dual Loomis-Whitney inequality is Meyer's inequality \cite{Mey88}, which states that for a compact convex set $K \subseteq \real^n$, the following lower bound holds:
\begin{equation}\label{eq: lw3}
V_n(K) \geq \left( \frac{n!}{n^n} \prod_{i=1}^n V_{n-1}(K \cap e_i^\perp) \right)^{\frac{1}{n-1}},
\end{equation}
with equality if and only if $K$ is a regular crosspolytope. Betke and McMullen \cite{BetMul83, CamGro11} established a reverse Loomis-Whitney type inequality for surface areas of compact convex sets:
\begin{equation}\label{eq: lw4}
V_{n-1}(\partial K)^2 \geq 4 \sum_{i=1}^n V_{n-1}(P_{e_i^\perp} K)^2.
\end{equation}
Campi et al. \cite{CamEtal16} extended inequalities \eqref{eq: lw3} and \eqref{eq: lw4} for intrinsic volumes of certain convex sets.

Our goal in this paper is to develop lower bounds on volumes and surface areas of geometric bodies that are most closely related to dual Loomis-Whitney inequalities; i.e., inequalities that use slice-based information. The primary mathematical tools we use are entropy and information inequalities; namely, the Brascamp-Lieb inequality, entropy bounds for log-concave random variables, and superadditivity properties of a suitable notion of Fisher information. Using information theoretic tools allows our results to be quite general. For example, our volume bounds rely on maximal slices parallel to a set of subspaces, and are valid for very general choice of subspaces. Our surface area bounds  are valid for polyconvex sets, which are of finite unions of compact convex sets. The drawback of using information theoretic strategies is that the resulting bounds are not always tight; i.e., equality may not achieved by any geometric body. However, we show that in some cases our bounds are asymptotically tight as the dimension $n$ tends to infinity, thus partly mitigating the drawbacks. Our main contributions are as follows:
\begin{itemize}
\item
\emph{Volume lower bounds:} In Theorem \ref{thm: vol}, we establish a new lower bound on the volume of a compact convex set in terms of the size of its slices. Just as Ball \cite{Ball91} extended the Loomis-Whitney inequality to projections in more general subspaces, our inequality also allows for slices parallel to subspaces that are not necessarily $e_i^\perp$. Another distinguishing feature of this bound is that unlike classical dual Loomis-Whitney inequalities, the lower bound is in terms of \emph{maximal slices}; i.e. the largest slice parallel to a given subspace. The key ideas we use are the Brascamp-Lieb inequality and certain entropy bounds for log-concave random variables.

\item 
\emph{Surface area lower bounds: }  Theorem \ref{thm: surface_area} contains our main result that provides lower bounds for surface areas. Unlike the volume bounds, the surface area bounds are valid for the  larger class of polyconvex sets, which consists of finite unions of compact, convex sets. Moreover, the surface area lower bound is not simply in terms of the maximal slice; instead, this bound uses all available slices along a particular hyperplane. As in the volume bounds, the slices used may be parallel to general $(n-1)$-dimensional subspaces, and not just $e_i^\perp$. The key idea is motivated by a superadditivity property of Fisher information established in Carlen \cite{Car91}. Instead of classical Fisher information, we develop superadditivity properties for a new notion of Fisher information which we call the $L_1$-Fisher information. This superadditivity property when restricted to uniform distributions over convex bodies yields the lower bound in Theorem \ref{thm: surface_area}.

\end{itemize}

The paper is structured as follows. In Sections \ref{section: volume} we state and prove our volume lower bound, and in Section \ref{section: surface} we state and prove our surface area bound. We conclude with some open problems and discussions in Section \ref{section: end}.

\paragraph{Notation: } For $n \geq 1$, let $[n]$ denote the set $\{1, 2, \dots, n\}$. For $K \subseteq \real^n$ and any subspace $E \subseteq \real^n$, the orthogonal projection of $K$ on $E$ is denoted by $P_E K$. The standard basis vectors in $\real^n$ are denoted by $\{e_1, e_2, \dots, e_n\}$. We use the notation $V_r$ to denote the volume functional in $\real^r$. The boundary of $K$ is denoted by $\partial K$, and its surface area is denoted by $V_{n-1}(\partial K)$. For a random variable $X$ taking values in $\real^n$, the marginal of $X$ along a subspace $E$ is denoted by $P_E X$. In this paper, we shall consider random variables with bounded variances and whose densities lie in the convex set
$\{f| \int_{\real^n} f(x) \log (1+f(x)) < \infty$. The differential entropy of such random variables is well-defined, and is given by 
$$h(X) = -\int_{\real^n}p_X(x) \log p_X(x) dx,$$
where $X \sim p_X$ is an $\real^n$-valued random variable. The Fisher information of a random variable $X$ with a differentiable density $p_X$ is given by
$$I(X) = \int_{\real^n} \norm{\nabla \log p_X(x)}^2 p_X(x) dx.$$

\section{Volume bounds}\label{section: volume}
The connection between functional/information theoretic inequalities and geometric inequalities is well-known. In particular, the Brascamp-Lieb inequality has found  several applications in geometry as detailed in Ball \cite{Ball01}. In the following section we briefly discuss the Brascamp-Lieb inequality and its relation to volume inequalities.
  
\subsection{Background on the Brascamp-Lieb inequality}
We shall use the the information theoretic form of the Brascamp-Lieb inequality, as found in Carlen et al. \cite{CarCor09}:
\begin{theorem}\label{thm: BL}[Brascamp-Lieb inequality]
Let $X$ be random variable taking values in $\real^n$. Let $E_1, E_2, \dots, E_m \subseteq \real^n $ be subspaces and $c_1, c_2, \dots, c_m > 0$ be constants. Define 
\begin{equation}
M = \sup_{X} h(X) - \sum_{j=1}^m c_j h(P_{E_i} X,
\end{equation}
and
\begin{equation}
M_g = \sup_{X \in \cG} h(X) - \sum_{j=1}^m c_j h(P_{E_i} X),
\end{equation}
where $\cG$ is the set of all Gaussian random variables taking values in $\real^n$. Then
$M = M_g$, and $M_g$ (and therefore $M$) is finite if and only if $\sum_{i=1}^m r_i c_i = n$ and for all subspaces $V \subseteq \real^n$, we have $\dim(V) \leq \sum_{i=1}^n \dim(P_{E_i} V) c_i$.
\end{theorem}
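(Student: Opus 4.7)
The plan is to split the theorem into two assertions and handle them in order: (i) identify when $M_g$ is finite by testing the functional on well-chosen Gaussian families, then (ii) prove the reduction $M = M_g$ via an information-theoretic heat flow, noting that $M \ge M_g$ is immediate since $\mathcal{G}$ is contained in the class of all admissible densities.

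For (i), observe that for a centered Gaussian $X \sim \mathcal{N}(0,A)$ with $A \succ 0$, the functional evaluates to
\[
\Phi_g(A) := h(X) - \sum_{j=1}^m c_j h(P_{E_j}X) = \tfrac{1}{2}\log\det A - \tfrac{1}{2}\sum_{j=1}^m c_j \log\det(P_{E_j}AP_{E_j}^{\top}) + C,
\]
where $C$ depends only on $n$, the $c_j$, and the $r_j := \dim E_j$. Replacing $A$ by $tA$ shifts this by $\tfrac{1}{2}(n - \sum_j c_j r_j)\log t$, so $\sum_j c_j r_j = n$ is forced whenever $M_g < \infty$. For the subspace condition, I would test covariances of the form $A_t = t\,\Pi_V + \Pi_{V^{\perp}}$, where $\Pi$ denotes orthogonal projection and $V \subseteq \real^n$ is arbitrary. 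Choosing bases adapted to $V$ and to each $E_j$, a direct computation yields $\Phi_g(A_t) = \tfrac{1}{2}\bigl(\dim V - \sum_j c_j \dim(P_{E_j}V)\bigr)\log t + O(1)$, forcing $\dim V \le \sum_j c_j \dim(P_{E_j}V)$. Conversely, when both conditions hold, finiteness of $M_g$ follows from a compactness argument on the positive-definite cone after quotienting out the scale degeneracy.

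For (ii), let $Z \sim \mathcal{N}(0,I_n)$ be independent of $X$ and set $X_t = X + \sqrt{t}Z$. Applying de Bruijn's identity to $X_t$ and to each $P_{E_j}X_t$ (noting that $P_{E_j}Z$ is standard Gaussian on $E_j$),
\[
\frac{d}{dt}\Phi(X_t) = \tfrac{1}{2}\Bigl(I(X_t) - \sum_{j=1}^m c_j I(P_{E_j}X_t)\Bigr).
\]
If the right-hand side is non-negative for every $t > 0$, then $\Phi(X_t)$ is non-decreasing. Since $X_t/\sqrt{t} \to Z$ in law and $\Phi$ is scale-invariant under the condition $\sum_j c_j r_j = n$, a continuity argument along the flow gives $\lim_{t\to\infty}\Phi(X_t) = \Phi_g(I_n) \le M_g$, and hence $\Phi(X) \le M_g$, whence $M \le M_g$.

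The main obstacle is justifying the Fisher-information inequality $I(X_t) \ge \sum_j c_j I(P_{E_j}X_t)$ under the geometric hypotheses, since this is essentially the Brascamp--Lieb inequality for Gaussians in disguise. I would attack it by expressing the score of each $P_{E_j}X_t$ as a conditional expectation of $\nabla\log p_{X_t}(X_t)$ given $P_{E_j}X_t$, using Jensen's inequality to reduce the claim to a positive-semidefinite matrix inequality on the score-covariance of $X_t$, and then invoking a convex-duality argument in the spirit of Lieb in which the dimension condition on every subspace $V$ supplies the certificate for the matrix inequality. All of the geometric content of the hypotheses enters precisely at this matrix-inequality step.
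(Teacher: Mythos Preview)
The paper does not supply its own proof of this theorem; it is quoted as a known result from Carlen--Cordero-Erausquin~\cite{CarCor09}. Your heat-flow strategy is indeed the method of that reference, and your necessity arguments in part~(i) are correct.

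There are, however, two genuine gaps. First, the sufficiency direction in~(i) cannot be dispatched by a ``compactness argument after quotienting out scale'': finiteness of the Gaussian supremum under the dimension conditions is itself a non-trivial theorem (essentially Lieb's result, or the structural analysis of Bennett--Carbery--Christ--Tao), not a soft compactness fact. Second, and more seriously, your sketch of the Fisher-information inequality does not go through as written. Expressing the projected score as a conditional expectation and applying Jensen gives
\[
\sum_{j=1}^m c_j\, I(P_{E_j}X_t) \;\le\; \E\Bigl[\rho(X_t)^{\!\top}\Bigl(\sum_{j=1}^m c_j P_{E_j}^{\!\top}P_{E_j}\Bigr)\rho(X_t)\Bigr],
\]
where $\rho = \nabla\log p_{X_t}$. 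For this to be bounded by $I(X_t) = \E[|\rho(X_t)|^2]$ one needs the \emph{operator} inequality $\sum_j c_j P_{E_j}^{\!\top}P_{E_j} \preceq I_n$; combined with the trace constraint $\sum_j c_j r_j = n$, this forces equality, i.e.\ John's condition. The subspace dimension conditions $\dim V \le \sum_j c_j \dim(P_{E_j}V)$ do \emph{not} imply this operator inequality, so no ``convex-duality certificate'' of the type you describe is available. The standard resolution is to first prove the Gaussian case (non-trivially), use the Gaussian extremizer to make a linear change of variables that places the data in geometric position, and only then run the heat-flow/Jensen argument. Your outline conflates these two logically separate steps.
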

Throughout this paper, we assume that $E_i$ and $c_i$ are such that $M < \infty$. As detailed in Bennett et al. \cite{BenEtal08}, the Brascamp-Lieb inequality generalizes many popular inequalities such as Holder's inequality, Young's convolution inequality, and the Loomis-Whitney inequality. In particular, Ball \cite{Ball91} showed that the standard Loomis-Whitney inequality in \eqref{eq: lw} could be extended to settings where projections are obtained on more general subspaces:

\begin{theorem}[Ball \cite{Ball91}]\label{thm: BL_vol}
Let $K$ be a closed and bounded set in $\real^n$. Let $E_i$ and $c_i$ for $i \in [m]$, and $M_g$ be as in Theorem \ref{thm: BL}. Let $P_{E_i}K$ be the projection of $K$ on to the subspace $E_i$, for $i \in [m]$.  Let the dimension of $E_i$ be $r_i$ for $ \in [m]$. Then the volume of $K$ may be upper-bounded as follows:
\begin{equation}
V_n(K) \leq e^{M_g} \prod_{i=1}^m V_{r_i}(P_{E_i}K)^{c_i}.
\end{equation}
\end{theorem}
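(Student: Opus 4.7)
The plan is to deduce the geometric inequality directly from the information-theoretic Brascamp--Lieb inequality by specializing to the uniform distribution on $K$. Specifically, I would let $X$ be a random variable uniformly distributed on the closed bounded set $K \subseteq \real^n$, so that the density of $X$ is $\mathbbm{1}_K / V_n(K)$ and its differential entropy satisfies
\begin{equation*}
h(X) = \log V_n(K).
\end{equation*}

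Next I would analyze the marginals. For each $i \in [m]$, the projection $P_{E_i} X$ is an $\real^{r_i}$-valued random variable whose density is supported on the set $P_{E_i} K \subseteq E_i$. The key observation is the classical maximum-entropy principle: among all densities supported on a set of finite $r_i$-dimensional volume, the uniform density has the largest differential entropy. Hence
\begin{equation*}
h(P_{E_i} X) \leq \log V_{r_i}(P_{E_i} K), \qquad i \in [m].
\end{equation*}

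Now I would invoke Theorem \ref{thm: BL}. By the definition of $M$ and the assumption that the $E_i$ and $c_i$ yield $M_g < \infty$, we have
\begin{equation*}
h(X) - \sum_{i=1}^m c_i \, h(P_{E_i} X) \leq M = M_g.
\end{equation*}
Substituting the identity for $h(X)$ and rearranging gives
\begin{equation*}
\log V_n(K) \leq M_g + \sum_{i=1}^m c_i \, h(P_{E_i} X) \leq M_g + \sum_{i=1}^m c_i \log V_{r_i}(P_{E_i} K),
\end{equation*}
where the second inequality uses the marginal entropy bound (note that the coefficients $c_i$ are positive, so the inequality is preserved). Exponentiating yields the claimed bound $V_n(K) \leq e^{M_g} \prod_{i=1}^m V_{r_i}(P_{E_i} K)^{c_i}$.

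There is essentially no hard step once the dictionary between volumes and differential entropies is set up: the substantive work has been done inside Theorem \ref{thm: BL}. The only minor subtlety worth double-checking is that $K$ being closed and bounded guarantees that the uniform distribution on $K$ is a well-defined probability measure with finite differential entropy, and that each projection $P_{E_i} K$ has finite $r_i$-dimensional volume, so that all quantities appearing in the chain of inequalities are finite and the marginal entropy bound is non-vacuous. Once these measurability and finiteness points are noted, the proof follows immediately from Brascamp--Lieb.
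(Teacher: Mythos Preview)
Your proposal is correct and follows essentially the same route as the paper: take $X$ uniform on $K$, bound each marginal entropy by $\log V_{r_i}(P_{E_i}K)$ via the maximum-entropy principle, plug into the entropic Brascamp--Lieb inequality, and exponentiate. The paper's proof is line-for-line the same argument, with your additional remarks on finiteness and well-posedness being a welcome bit of extra care.
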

Since we shall be using a similar idea in Section \ref{section: volume}, we include a proof for completeness.
\begin{proof}
Consider a random variable $X$ that is uniformly distributed on $K$; i.e. $X \sim p_X = \text{Unif}(K)$. Let $P_{E_i}X$ denote the random variable obtained by projecting $X$ on $E_i$, or equivalently the marginal of $X$ in subspace $E_i$. Naturally, $\supp(P_{E_i}X) \subseteq P_{E_i}K$, and thus
\begin{equation}
h(P_{E_i}X) \leq \log V_{r_i} (P_{E_i}K), \quad \text{ for $i \in [n]$.}
\end{equation}
Substituting these inequalities in the Brascamp-Lieb inequality for $X$, we obtain
\begin{align}
h(X) = \log V_n(K) \leq \sum_{j=1}^m c_j \log V_{r_i}(P_{E_i} K) + M_g.
\end{align}
Exponentiating both sides concludes the proof.
\end{proof}
To show that the Loomis-Whitney inequality is implied by Theorem \ref{thm: BL_vol}, we set $E_i = e_i^\perp$, $c_i = n/(n-1)$ for $i \in [n]$, and use Szasz's inequality or other tools from linear algebra \cite{BecBel12} to show that the supremum below evaluates to 1:
\begin{align*}
e^{M_g} = \sup_{K \succeq 0} \frac{\det{K}}{\prod_{i=1}^n \det{K_i}^{\frac{1}{n}}}.
\end{align*}
In general, Ball \cite{Ball91} showed that if the $E_i$ and $c_i$ satisfy what is called John's condition; i.e. $\sum_{i=1}^m c_iP_{E_i}x = x$ for all $x \in \real^n$, then $M_g = 0$.

\subsection{Volume bounds using slices}
Providing \emph{lower bounds} for volumes in terms of projections requires making additional assumptions on the set $K$. A simple counterexample is the $(n-1)$ dimensional sphere (shell), which can have arbitrarily large projections in lower dimensional subspaces, but has 0 volume. Even for convex $K$, providing lower bounds using a finite number of projections fails. For example, given a finite collection of subspaces, we may consider any convex set supported on a random $(n-1)$ dimensional subspace of $\real^n$ which will have (with high probability) non-zero projections on all subspaces in the collection. Clearly, such a set has volume 0. Therefore, it makes sense to obtain lower bounds on volumes using \emph{slices} instead of projections, as in Meyer's inequality \eqref{eq: lw3}. 

Given a subspace $E_i$, the slice parallel to $E_i^\perp$ is not unambiguously defined as it depends on translations of $E_i^\perp$. For this reason we consider the \emph{maximal slice}; i.e. the largest slice parallel to a given subspace. Note that although Meyer's inequality \eqref{eq: lw3} is not stated in terms of maximal slices, it remains valid even if the right hand side of inequality \eqref{eq: lw3} is replaced by maximal slices parallel to $e_i^\perp$. This is because one can always choose the origin of the coordinate system such that the largest slice parallel to $e_i^\perp$ is $K \cap e_i^\perp$. However, when subspaces are in a more general orientation, it is not always possible to select the origin that simultaneously maximizes the slices along all subspaces. Our main result is the following:

\begin{theorem}\label{thm: vol}
Let $K$ be a compact convex body in $\real^n$. For $j \in [m]$, let $E_j \subseteq \real^n$ be subspaces with dimensions $r_j$, and $c_j > 0$ be constants. Let $S_{\max}(j)$ be the largest slice of $K$ by a subspace orthogonal to $E_j$; i.e.,
\begin{equation}
S_{\max}(j) = \sup_{t \in E_j} V_{n-r_j}(K \cap (E_j^\perp + t)).
\end{equation}
Then the following inequality holds:
\begin{equation}
V_n(K) \geq \left(\frac{\prod_{j=1}^m S_{\max}(j)^{c_j}}{e^{n+M_g}} \right)^{1/(C-1)},
\end{equation}
where $C = \sum_{j=1}^m c_j$, and $M_g$ is the Brascamp-Lieb constant corresponding to $\{E_i, c_i\}_{i \in [m]}$. 
\end{theorem}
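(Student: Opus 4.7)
The plan is to translate the theorem into an entropy statement and combine the Brascamp--Lieb inequality (Theorem \ref{thm: BL}) with a standard entropy bound for log-concave densities. Let $X\sim\mathrm{Unif}(K)$, so that $h(X)=\log V_n(K)$. Applying Theorem \ref{thm: BL} to $X$ immediately yields
$$\log V_n(K) \;=\; h(X) \;\le\; \sum_{j=1}^m c_j\, h(P_{E_j}X) + M_g.$$
The task is therefore to bound each marginal entropy $h(P_{E_j}X)$ from above in a way that introduces the maximal slice $S_{\max}(j)$.

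Because $K$ is convex, $X$ is log-concave, and hence each marginal $P_{E_j}X$ is log-concave on $E_j\cong\real^{r_j}$ by Pr\'ekopa--Leindler. Its density at $t\in E_j$ equals $V_{n-r_j}\!\bigl(K\cap(E_j^\perp+t)\bigr)/V_n(K)$, and thus attains supremum value $S_{\max}(j)/V_n(K)$ by the very definition of $S_{\max}(j)$. I will then invoke the known entropy-vs-sup-norm estimate for log-concave densities: any log-concave density $f$ on $\real^d$ satisfies $h(f)\le d-\log\|f\|_\infty$. Applied to $P_{E_j}X$, this gives
$$h(P_{E_j}X) \;\le\; r_j - \log\bigl(S_{\max}(j)/V_n(K)\bigr) \;=\; r_j + \log V_n(K) - \log S_{\max}(j).$$
Substituting back into the Brascamp--Lieb bound, using the Brascamp--Lieb constraint $\sum_j c_j r_j=n$, and denoting $C=\sum_j c_j$, I obtain
$$(C-1)\log V_n(K) \;\ge\; \sum_{j=1}^m c_j\log S_{\max}(j) \;-\; n \;-\; M_g,$$
and exponentiating yields the stated inequality. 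Note that $C>1$ is automatic in any nontrivial situation since $\sum_j c_j r_j=n$ and each $r_j\le n-1$.

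The main obstacle, and the step where log-concavity of $K$ is essential, is the upper bound $h(f)\le d-\log\|f\|_\infty$: this is precisely what converts the Brascamp--Lieb statement about \emph{entropies} of marginals (an average-slice quantity) into a statement involving the \emph{maximal} slice $S_{\max}(j)$. Without log-concavity one only gets the trivial estimate $h(P_{E_j}X)\le \log V_{r_j}(P_{E_j}K)$, which produces an upper bound on $V_n(K)$ in terms of projections (as in Theorem \ref{thm: BL_vol}) rather than a lower bound in terms of slices. A secondary routine point is verifying that the marginal of $\mathrm{Unif}(K)$ has the stated density formula and is log-concave, which is immediate from Pr\'ekopa--Leindler and the convexity of $K$.
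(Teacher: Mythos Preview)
Your proposal is correct and follows essentially the same route as the paper: take $X\sim\mathrm{Unif}(K)$, apply the entropic Brascamp--Lieb inequality, bound each marginal entropy via the log-concave estimate $h(f)\le d-\log\|f\|_\infty$ (which the paper attributes to Bobkov--Madiman), identify $\|f\|_\infty$ with $S_{\max}(j)/V_n(K)$, and rearrange using $\sum_j c_j r_j=n$ and $C>1$. The only cosmetic differences are that the paper cites \cite{SauWel14} rather than Pr\'ekopa--Leindler for log-concavity of marginals and phrases the $C>1$ step as $\sum_j c_j>\sum_j c_j(r_j/n)=1$.
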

\begin{proof}
There are two main components in the proof. First, let $X$ be a random variable that is uniformly distributed on $K$. The Brascamp-Lieb inequality yields the bound
\begin{align*}
h(X) \leq \sum_{j=1}^m c_j h(P_{E_j} X) + M_g.
\end{align*}
When deriving upper bounds on volume, we employ the upper bound $h(P_{E_i}X) \leq \log V_{r_i}(P_{E_i}K)$. Here, we employ a slightly different strategy. Note that $X$, being a uniform distribution on a convex set, is a log-concave random variable. Thus, any lower dimensional marginal of $X$ is also log-concave \cite{SauWel14}. Furthermore, the entropy of a log-concave random variable is tightly controlled by the maximum value of its density. For a log-concave random variable $Z$ taking values in $\real^n$ and distributed as $p_Z$, it was shown in Bobkov and Madiman \cite{BobMad11} that
\begin{align*}
\frac{1}{n}\log \frac{1}{\norm{p_Z}_\infty} \leq \frac{h(Z)}{n} \leq \frac{1}{n}\log \frac{1}{\norm{p_Z}_\infty} + 1,
\end{align*}
where $\norm{p_Z}_\infty$ is the largest value of the probability density $p_Z$.
Define $Z_i \defn P_{E_i} X$. The key point to note is that $\norm{p_{Z_i}}_\infty$ is given by the size of the largest slice parallel to $E_i^\perp$, normalized by $V_n(K)$; i.e., $\norm{p_{Z_i}}_\infty = \frac{S_{\max}(i)}{V_n(K)}.$
Thus, for $i \in [m]$, 
\begin{align*}
h(Z_i) \leq r_i + \log \frac{1}{\norm{p_{Z_i}}_\infty} = r_i + \log \frac{V_n(K)}{S_{\max}(i)}.
\end{align*}
Substituting this in the Brascamp-Lieb bound, we obtain
\begin{align*}
\log V_n(K) &\leq \sum_{j=1}^m \left( c_j r_j + c_j\log \frac{V_n(K)}{S_{\max}(j)}\right) + M_g\\
&= n + C \log V_n(K) - \sum_{j=1}^m c_j \log S_{\max}(j) + M_g.
\end{align*}
Note that $\sum_{j=1}^m  c_j > \sum_{j=1}^m c_j (r_j/n) = 1,$ and thus we may rearrange and exponentiate to obtain
\begin{align*}
V_n(K) \geq \left(\frac{\prod_{j=1}^m  S_{\max}(j)^{c_j}}{e^{n+M_g}}\right)^{\frac{1}{C-1}}.
\end{align*}
\end{proof}

It is instructive to compare Meyer's inequality to the bound obtained using Theorem \ref{thm: vol} for the same choice of parameters. Substituting $M_g = 0$, $E_i = e_i$, and $c_i =1$, Theorem \ref{thm: vol} gives the bound
\begin{align*}
V_n(K) \geq \left(\frac{\prod_{j=1}^m  S_{\max}(j)}{e^{n}}\right)^{\frac{1}{n-1}}.
\end{align*}
To compare with Meyer's inequality \eqref{eq: lw3}, first assume that the origin of the coordinate plane is selected such that the intersection with $e_i^\perp$ corresponds to the maximal slice along $e_i^\perp$. With such a choice, we may simply compare the constants in the two inequalities. Observe that  
\begin{align*}
\left(\frac{n!}{n^{n-1}}\right)^{\frac{1}{n-1}} \leq \left(\frac{1}{e^n}\right)^{\frac{1}{n-1}},
\end{align*}
and thus Meyer's inequality \eqref{eq: lw3} yields a tighter bound. However,  Sterling's approximation implies that for large enough $n$ the two constants are approximately the same. Thus, Theorem \ref{thm: vol} yields an asymptotically tight result. 

Note that if the slices are not aligned along the coordinate axes, or if the slices are in larger dimensions, then Meyer's inequality \eqref{eq: lw3} is not applicable but Theorem \ref{thm: vol} continues to yield valid inequalities. An important special case is when there are more than $n$ directions along which slices are available. If $u_1, u_2, \dots, u_m$ are unit vectors and constants $c_1, c_2, \dots, c_m$ satisfy John's condition \cite{Ball91}; i.e., $\sum_{j=1}^m c_j P_{u_j} (x) = x$ for all $x \in \real^n$, then Theorem \ref{thm: vol} yields the bound
\begin{align}\label{eq: john_vol}
V_n(K) \geq \left(\frac{\prod_{j=1}^m  S_{\max}(j)^{c_j}}{e^{n}}\right)^{\frac{1}{n-1}},
\end{align}
where $S_{\max}(j)$ is the size of the largest slice by a hyperplane perpendicular to $u_j$. Note that the bound from Theorem \ref{thm: BL_vol} in this case is
\begin{align*}
V_n(K) \leq \prod_{j=1}^m V_{n-1}(P_{u_j^\perp}K)^{c_j},
\end{align*}
which may be compared with inequality \ref{eq: john_vol} by observing  $S_{\max}(j) \leq V_{n-1}(P_{u_j^\perp}K)$.

\section{Surface area bounds}\label{section: surface}

The information theoretic quantities of entropy and Fisher information are closely connected to the geometric quantities of volume and surface area, respectively. Surface area of $K \subseteq \real^n$ is defined as
\begin{equation}
V_{n-1}(\partial K) = \lim_{\epsilon \to 0} \frac{V_n(K \oplus \epsilon B_n) - V_n(K)}{\epsilon},
\end{equation}
where $B_n$ is the Euclidean ball in $\real^n$ with unit radius and $\oplus$ refers to the Minkowski sum. The Fisher information of a random variable $X$ satisfies a similar relation, 
\begin{equation}
I(X) = \lim_{\epsilon \to 0} \frac{h(X+\sqrt \epsilon Z) - h(Z)}{\epsilon},
\end{equation}
where $Z$ is a standard Gaussian random variable that is independent of $X$. Other well-known connections include the relation between the entropy of a random variable and the volume of its typical set \cite{CovTho12}, isoperimetric inequalities concerning Euclidean balls and Gaussian distributions, and the observed similarity between the Brunn-Minkowski inequality and the entropy power inequality \cite{DemEtal91}. In Section \ref{section: volume}, we used subadditivity of entropy as given by the Brascamp-Lieb inequality to develop volume bounds. To develop surface area bounds, it seems natural to use Fisher information inequalities and adapt them to geometric problems. In the following subsection, we discuss relevant Fisher-information inequalities.

\subsection{Superadditivity of Fisher information}

The Brascamp-Lieb subadditivity of entropy has a direct analog noted in \cite{CarCor09}. We focus on the case when $\{u_j\}$ and constants $\{c_j\}$ for $j \in [m]$ satisfy John's condition. The authors in \cite{CarCor09} provide an alternate proof to the Brascamp-Lieb inequality in this case by 
by first showing a superadditive property of Fisher information, which states that
\begin{equation}\label{eq: geom_BL2}
I(X) \geq \sum_{j=1}^m c_j I(P_{u_j} X).
\end{equation}
The Brasamp-Lieb inequality follows by integrating inequality \eqref{eq: geom_BL2} using the following identity that holds for all random variables $X$ taking values in $\real^n$:
\begin{equation}
h(X) = \frac{n}{2} \log 2\pi e - \int_{t=0}^\infty \left(I(X_t) - \frac{n}{1+t} \right) dt,
\end{equation}
where $X_t = X + \sqrt t Z$ for a standard normal random variable $Z$ that is independent of $X$. If $u_i = e_i$ and $c_i = 1$ for $i \in [n]$, then inequality \eqref{eq: geom_BL2} reduces to the superadditivity of Fisher information:
\begin{equation}\label{eq: l2_super}
I(X) \geq \sum_{i=1}^n I(X_i),
\end{equation}
where $X = (X_1, \dots, X_n)$. 

In Section \ref{section: volume}, we directly used the entropic Brascamp-Lieb inequality on random variables uniformly distributed over suitable sets $K \subseteq \real^n$. It is tempting to use inequality \eqref{eq: geom_BL2} to derive surface area bounds for geometric bodies. Unfortunately, directly substituting $X$ to be uniform over $K \subseteq \real^n$ in inequality \eqref{eq: geom_BL2} does not lead to any useful bounds. This is because the left hand side, namely $I(X)$, is $+\infty$ since the density of $X$ is not differentiable. Thus, it is necessary to modify inequality \eqref{eq: geom_BL2} before we can apply it to geometric problems. A classical result concerning superadditivity of Fisher information-like quantities is provided in Carlen \cite{Car91}:
\begin{theorem}[Theorem 2, \cite{Car91}]\label{thm: Car91}
For $p \in [1, \infty)$, let $f: \real^m \times \real^n \to \real$ be a function in $L^p(\real^m) \otimes W^{1,p}(\real^n)$. Define the marginal map $M$ as
\begin{equation}
G(y) = \left(\int_{\real^m} \abs{f(x,y)}^p dx)\right)^{1/p},
\end{equation}
denoted by $Mf = G$. Then the following inequality holds:
\begin{equation}\label{eq: Car91}
\int_{\real^n} \abs{ \nabla_y G(y) }^p dy \leq \int_{\real^m}\int_{\real^n} \abs{\nabla_y f(x,y)}^p dxdy.
\end{equation}
\end{theorem}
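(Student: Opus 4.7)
The plan is to prove the inequality pointwise in $y$ first, and then integrate. The idea is to differentiate the definition $G(y)^p = \int_{\real^m} \abs{f(x,y)}^p dx$ under the integral sign, and then use H\"older's inequality with conjugate exponents $p$ and $p/(p-1)$ to extract a clean bound on $\abs{\nabla_y G(y)}$ in terms of the $L^p$-norm (in $x$) of $\nabla_y f(x,y)$. This turns the claim into an almost mechanical consequence of H\"older, provided that the differentiation step can be justified rigorously in the Sobolev setting.

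More concretely, first I would assume for the moment that $f$ is smooth, compactly supported, and strictly positive, so that all derivatives exist classically. Differentiating both sides of $G(y)^p = \int \abs{f(x,y)}^p dx$ with respect to $y$ gives
\begin{equation*}
p\, G(y)^{p-1}\, \nabla_y G(y) \;=\; \int_{\real^m} p\, \abs{f(x,y)}^{p-1} \sign(f(x,y))\, \nabla_y f(x,y)\, dx,
\end{equation*}
and after taking Euclidean norms and pulling the norm inside the integral by the triangle inequality,
\begin{equation*}
G(y)^{p-1}\, \abs{\nabla_y G(y)} \;\leq\; \int_{\real^m} \abs{f(x,y)}^{p-1}\, \abs{\nabla_y f(x,y)}\, dx.
\end{equation*}
Applying H\"older's inequality on the right with exponents $p/(p-1)$ and $p$ factors this as $G(y)^{p-1}\bigl(\int \abs{\nabla_y f(x,y)}^p dx\bigr)^{1/p}$. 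Dividing by $G(y)^{p-1}$ and raising to the $p$-th power gives the pointwise bound $\abs{\nabla_y G(y)}^p \leq \int_{\real^m} \abs{\nabla_y f(x,y)}^p dx$, and integrating in $y$ finishes the proof.

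The main obstacle is justifying the differentiation under the integral and dealing with the zero sets of $f$ and $G$, since $f \in L^p \otimes W^{1,p}$ need not be classically differentiable and $\sign$ is not defined where $f = 0$. To handle this, I would replace $f$ by the mollification $f_\varepsilon \ast \rho_\varepsilon$ and replace $\abs{f}$ by $(\abs{f}^2 + \delta)^{1/2}$, run the above argument for the smooth strictly positive approximations (where everything is classical and H\"older is sharp), then let $\delta \to 0$ and $\varepsilon \to 0$, using lower semicontinuity of the left-hand side and $L^p$-convergence of weak derivatives on the right. On the set $\{G = 0\}$ the inequality is vacuous since $G \equiv 0$ on a neighborhood forces $\nabla_y G = 0$ a.e., so no genuine issue arises there. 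For the endpoint $p = 1$ the inequality in fact simplifies: one has $\abs{\nabla_y G(y)} = \abs{\nabla_y \int \abs{f(x,y)} dx} \leq \int \abs{\nabla_y f(x,y)} dx$ directly from Stampacchia's theorem ($\abs{\nabla_y \abs{f}} \leq \abs{\nabla_y f}$ a.e.) combined with the triangle inequality, so no H\"older step is needed. I expect the measure-theoretic approximation argument to be the only delicate part; the algebra of the proof itself is just H\"older.
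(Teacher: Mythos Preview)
Your approach is correct and is essentially the classical route (close to Carlen's original): differentiate $G(y)^p = \int |f|^p\,dx$ under the integral, apply H\"older with exponents $p/(p-1)$ and $p$ to get the pointwise bound $|\nabla_y G(y)|^p \le \int |\nabla_y f(x,y)|^p\,dx$, then integrate in $y$; the mollification and $\sqrt{|f|^2+\delta}$ regularizations you describe are exactly what is needed to make this rigorous in $W^{1,p}$.

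The paper does not reproduce this proof. Immediately after stating the theorem it instead offers a genuinely different derivation: it recasts the inequality as an instance of the data-processing inequality for $\phi$-divergences. One sets $\tilde f = f^p$, $\tilde G = G^p$ (viewed as joint and marginal densities), perturbs $(X,Y)\sim\tilde f$ to $(X,Y+\epsilon)$, applies data processing for $D_\phi$ with $\phi(t)=(t-1)^p$ to the marginalization map $(X,Y)\mapsto Y$, Taylor-expands both sides to order $\epsilon^p$, and lets $\epsilon\to 0$; the leading coefficients are precisely the two integrals in the theorem. Your H\"older argument is more self-contained and fully rigorous once the approximation is done, and requires no probabilistic reinterpretation. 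The paper's data-processing viewpoint is more heuristic (the Taylor step is not justified), but it is the conceptual engine for the rest of the paper: it explains why $p=1$ (total variation) is the right specialization for uniform densities on compact sets, where the Sobolev hypotheses fail and your differentiation-under-the-integral step would not even be available, and it directly motivates the paper's Definition~\ref{def: l1_fisher} of $L_1$-Fisher information as a limiting total-variation quotient rather than as an integral of weak derivatives.
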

Carlen \cite{Car91} also established the (weak) differentiability of $G$ and the continuity of $M$ prior to proving Theorem \ref{thm: Car91}, so the derivatives in its statement are well-defined. The notion of Fisher information we wish to use is essentially identical to the case of $p=1$ in Theorem \ref{thm: Car91}. However, since our goal is to use this result for uniform densities over compact sets, we cannot directly use Theorem \ref{thm: Car91}, since such densities do not satisfy the required assumptions. In particular, the (weak) partial derivatives of conditional densities are defined in terms of Dirac delta distributions which not lie in the Sobolev space $W^{1,1}(\real^n)$. To get around this, we redefine the $p=1$ case as follows:

\begin{definition}\label{def: l1_fisher}
	Let $X=(X_1, \ldots, X_n)$ be a random vector on $\mathbb{R}^n$ and $f_X(\cdot)$ be its density function.  For any unit vector $u \in \mathbb{R}^n$, define
	\[
		I_1(X)_u \defn 	\lim_{\epsilon \rightarrow 0^+} 
		 		\int_{\mathbb{R}} \frac{|f_X(x) - f_X(x - \epsilon u)|}{\epsilon} dx,
	\]
	given that the limit exists. Define the $L_1$-Fisher information of $X$ as
	\[
		I_1(X) \defn \sum_{i=1}^{n} I_1(X)_{e_i},
	\]
	given that the right hand side is well-defined.  In particular, when $X$ is a real-valued random variable,
	\[
		I_1(X) = \lim_{\epsilon \rightarrow 0^+} \int_\mathbb{R} \frac{|f_X(x) - f_X(x- \epsilon)|}{\epsilon}dx.
	\]

	\end{definition}

Our new definition is motivated by observing that Theorem \ref{thm: Car91} is essentially a data processing result for $\phi$-divergences, and specializing it to the total variation divergence yields our definition. To see this, consider real-valued random variables $X$ and $Y$ with a joint density $\tilde f(x,y)$. Let the marginal of $Y$ on $\real$ be $\tilde G(\cdot)$. For $\epsilon > 0$, consider the perturbed random variable $(X_\epsilon, Y_\epsilon) = (X, Y+\epsilon)$. Let the joint density of this perturbed random variable be $\tilde f_\epsilon$, and the marginal of $Y_\epsilon$ by $\tilde G_\epsilon$. Recall that for every convex function $\phi$ satisfying $\phi(1) = 0$, it is possible to define the divergence $D_\phi(p || q) = \int \phi\left(\frac{p(x)}{q(x)} \right)q(x) dx$ for two probability densities $p$ and $q$. Since such divergences satisfy the data-processing inequality, it is clear that 
\begin{equation}\label{eq: data_proc}
D_\phi(\tilde f_\epsilon || \tilde f) \geq D_\phi(\tilde G_\epsilon || \tilde G).
\end{equation}
Choosing $\phi(t) = (t-1)^p$, and using Taylor's expansion, it is easy to see that 
\begin{align*}
D_\phi(\tilde f_\epsilon || \tilde f) &=
\int_{\real^2} \frac{\abs{\tilde f(x,y)-\tilde f(x,y-\epsilon)}^p}{\tilde f(x,y)^{p-1}} dxdy 
= \epsilon^p \left(\int_{\real^2} \frac{\abs{\partial \tilde f(x,y)/\partial y}^p}{\tilde f(x,y)^{p-1}} dxdy \right)+ o(\epsilon^p).\\
\end{align*}
And similarly,
\begin{align*}
D_\phi(\tilde G_\epsilon || \tilde G) &=
\int_{\real} \frac{\abs{\tilde G(y)-\tilde G(y-\epsilon)}^p}{\tilde G(y)^{p-1}} dy 
= \epsilon^p \left(\int_{\real} \frac{\abs{d\tilde G(y)/dy}^p}{\tilde G(y)^{p-1}} dy \right)+ o(\epsilon^p).\\
\end{align*}
Substituting in inequality \eqref{eq: data_proc}, dividing by $\epsilon^p$, and taking the limit as $\epsilon \to 0$ yields
\begin{equation}
\int_{\real^2} \frac{\abs{\partial \tilde f(x,y)/\partial y}^p}{\tilde f(x,y)^{p-1}} dxdy \geq \int_{\real} \frac{\abs{d \tilde G/dy}^p}{\tilde G(y)^{p-1}} dy.
\end{equation}
The above inequality is exactly equivalent to that in Theorem \ref{thm: Car91} using the substitution $\tilde G = G^p$ and $\tilde f = f^p$. Although we focused on joint densities over $\real \times \real$, the same argument also goes through for random variables on $\real^m \times \real^n$. 

Recall that Definition \ref{def: l1_fisher} redefines the case of $p=1$ in Theorem \ref{thm: Car91}. Such redefinitions could indeed be done for $p>1$ as well. However, the perturbation argument presented above makes it clear that if $p > 1$, the $\phi$-divergence between a random variable (taking uniform values on some compact set) and its perturbation will be $+\infty$, since their respective supports are mismatched. Thus, analogous definitions for $p > 1$ will not yield useful bounds for such distributions. Using Definition \ref{def: l1_fisher}, we now establish superadditivity results for the $L_1$-Fisher information.

\begin{lemma}\label{lemma: l1_direction}
\label{thm: I_1(X)_1 < I_1(X_1)}
Let $X$ be an $\real^n$-valued random variable with a smooth density $f_X(\cdot)$.  Let $u \in \real^n$ be any unit vector. Define $X \cdot u$ to be the projection of $X$ along $u$. Then the following inequality holds when both sides are well-defined:
\begin{align}
I_1(X\cdot u) \leq I_1(X)_u.
\end{align}
\end{lemma}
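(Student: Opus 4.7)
The plan is to prove the inequality by relating the density of the one-dimensional projection to the joint density through integration along the hyperplane $u^\perp$, and then applying the triangle inequality before passing to the limit.

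First, I would express the density of $Y = X \cdot u$ in terms of $f_X$. Choose an orthonormal basis of $\real^n$ whose first vector is $u$ and whose remaining vectors span $u^\perp$. Writing $x = y\,u + z$ with $z \in u^\perp$, the density of $Y$ is
\[
g(y) = \int_{u^\perp} f_X(yu + z)\, dz.
\]
This is the standard marginalization formula. Since $f_X$ is smooth and (implicitly) well-behaved enough that both sides of the claimed inequality exist, one can differentiate (and therefore take finite differences) under the integral sign.

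Next, I would form the finite-difference quotient. For $\epsilon > 0$,
\[
g(y) - g(y - \epsilon) = \int_{u^\perp} \bigl[f_X(yu+z) - f_X((y-\epsilon)u + z)\bigr]\, dz.
\]
Observing that $(y-\epsilon)u + z = (yu + z) - \epsilon u$, the integrand is exactly the $\epsilon u$-shift of $f_X$ at the point $x = yu+z$. Applying the triangle inequality for the integral over $u^\perp$,
\[
|g(y) - g(y - \epsilon)| \le \int_{u^\perp} \bigl|f_X(yu+z) - f_X(yu+z - \epsilon u)\bigr|\, dz.
\]
Integrating over $y \in \real$ and using Fubini's theorem (justified by nonnegativity of the integrand), together with the orthonormal change of variables $(y,z) \leftrightarrow x = yu + z$ which preserves Lebesgue measure, yields
\[
\int_{\real} |g(y) - g(y-\epsilon)|\, dy \le \int_{\real^n} |f_X(x) - f_X(x - \epsilon u)|\, dx.
\]

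Finally, I would divide both sides by $\epsilon$ and send $\epsilon \to 0^+$. The right-hand side converges to $I_1(X)_u$ by definition, and the left-hand side converges to $I_1(X \cdot u)$, giving the claimed bound. The main technical obstacle is justifying the passage to the limit on the left-hand side: a priori, the limit of an $\limsup$ of quotients could be smaller than the $\limsup$ after taking the limit, but since we are asserting an upper bound on $I_1(X \cdot u)$, the direction of the inequality combined with $\liminf/\limsup$ control (or simply taking $\limsup$ on both sides and using the assumed existence of both limits) suffices. This limit-exchange step, handled via the smoothness hypothesis on $f_X$ and the assumption that both $I_1(X \cdot u)$ and $I_1(X)_u$ are well-defined, is the only delicate point; everything else is a linearity-plus-triangle-inequality argument on the marginalization integral.
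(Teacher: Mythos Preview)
Your proof is correct and is essentially the same as the paper's: the paper phrases the key step as the data-processing inequality for total variation applied to the pair $(X, X+\epsilon u)$ under the marginalization map $x\mapsto x\cdot u$, which is exactly the triangle-inequality-plus-Fubini computation you carry out explicitly. Both arguments then divide by $\epsilon$ and take the limit, relying on the assumption that both sides are well-defined.
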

\begin{proof}
Define the random variable $X_\epsilon \defn X + \epsilon u$. Then the distribution of $X_\epsilon$ satisfies
\begin{align}
f_{X_\epsilon}(x) = f_X(x - \epsilon u),
\end{align}
and is therefore a translation of $f_X$ along the direction $u$ by a distance $\epsilon$. Using the data-processing inequality for total-variation distance, we obtain 
\begin{align}\label{eq: tv}
d_{TV}(X\cdot u, X_\epsilon \cdot u) \leq d_{TV}(X_\epsilon, X), 
\end{align}
where $d_{TV}$ is the total variation divergence. Notice that $X_\epsilon \cdot u = X \cdot u + \epsilon$, and thus $f_{X_\epsilon \cdot u}(x) = f_{X \cdot u}(x-\epsilon)$. Dividing the left hand side of inequality \eqref{eq: tv} by $\epsilon$ and taking the limit as $\epsilon \to 0$, we obtain 
\begin{align*}
\lim_{\epsilon \to 0_+} \frac{d_{TV}(X\cdot u, X_\epsilon \cdot u) }{\epsilon} &= \frac{1}{2} \lim_{\epsilon \to 0_+} \int_\real \frac{\abs{f_{X\cdot u}(x) - f_{X \cdot u}(x - \epsilon)}}{\epsilon} dx\\
%
%
&\stackrel{(a)}= I_1(X \cdot u).
\end{align*}
Here, equality $(a)$ follows by the definition of $I_1(X \cdot u)$ and the assumption that it is well-defined. Doing a similar calculation for the right hand side of inequality \eqref{eq: tv} leads to 
\begin{align*}
\lim_{\epsilon \to 0_+} \frac{d_{TV}(X, X_\epsilon)}{\epsilon} &= \frac{1}{2} \lim_{\epsilon \to 0_+}  \int_{\real^n} \frac{\abs{f_X(x) - f_{X}(x-\epsilon u)}}{\epsilon} dx\\
&\stackrel{(a)}= \frac{1}{2} I_1(X)_u.
\end{align*}
The equality in $(a)$ follows from the definition of $I_1(X)_u$ and the assumption that it is well-defined.
\end{proof}
Our next result is a counterpart to the superadditivity property of Fisher information as in inequality \eqref{eq: l2_super}.

\begin{theorem}\label{thm: l1_super}
\label{cor: sum I_1(X_i) < I_1(X)}
Let $X = (X_1, \dots, X_n)$ be an $\real^n$-valued random variable. Then the following superadditivity property holds:
\begin{align*}
\sum_{i=1}^n I_1(X_i) \leq I_1(X).
\end{align*}
\end{theorem}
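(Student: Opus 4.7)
The plan is to recognize that Theorem \ref{thm: l1_super} is essentially a direct corollary of Lemma \ref{lemma: l1_direction} once we unpack the definition of $I_1(X)$. The $L_1$-Fisher information $I_1(X)$ is defined as the sum $\sum_{i=1}^{n} I_1(X)_{e_i}$ of directional contributions along the standard basis, and Lemma \ref{lemma: l1_direction} tells us that each directional contribution $I_1(X)_u$ dominates the $L_1$-Fisher information of the one-dimensional projection $X \cdot u$.

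First I would specialize Lemma \ref{lemma: l1_direction} by choosing $u = e_i$ for each $i \in [n]$. Under this choice, the projection $X \cdot e_i$ is exactly the coordinate $X_i$, so the lemma yields
\begin{equation*}
    I_1(X_i) \;\leq\; I_1(X)_{e_i} \qquad \text{for every } i \in [n],
\end{equation*}
assuming both quantities are well-defined (which is implicit in the hypothesis that $I_1(X)$ exists, since by definition this requires each $I_1(X)_{e_i}$ to exist, and existence of $I_1(X)_{e_i}$ is exactly what the lemma needs on the right-hand side). Then I would sum these $n$ inequalities and apply the definition $I_1(X) = \sum_{i=1}^n I_1(X)_{e_i}$ to obtain $\sum_{i=1}^n I_1(X_i) \leq I_1(X)$.

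There is no substantive obstacle here; the only mild subtlety is a bookkeeping one, namely ensuring that the hypotheses of Lemma \ref{lemma: l1_direction} are indeed met for each coordinate direction. In particular, one must check that $I_1(X_i)$ is well-defined as a one-dimensional $L_1$-Fisher information. For a smooth density $f_X$, the marginal densities $f_{X_i}$ are also smooth, so the limit defining $I_1(X_i)$ exists whenever the corresponding directional limit on the joint density exists, which is already guaranteed by assuming $I_1(X)$ is well-defined. With these existence considerations dispatched, the proof reduces to a one-line aggregation of the lemma across the $n$ coordinate directions.
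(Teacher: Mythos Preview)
Your proposal is correct and essentially identical to the paper's own proof: the paper also applies Lemma~\ref{lemma: l1_direction} with $u = e_i$ for each $i \in [n]$, obtains $I_1(X_i) \leq I_1(X)_{e_i}$, and then sums to conclude $\sum_i I_1(X_i) \leq \sum_i I_1(X)_{e_i} = I_1(X)$. Your additional remarks on well-definedness are more careful than the paper's treatment but do not change the argument.
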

\begin{proof}
Applying Lemma \ref{lemma: l1_direction} for the unit vectors $e_1, \dots, e_n$, we obtain
\begin{align*}
\sum_{i=1}^n I_1(X_i) &\leq \sum_{i=1}^{n}I_1(X)_{e_i}
= I_1(X).
\end{align*}
\end{proof}

\subsection{Surface integral form of the $L_1$-Fisher information} 
If we consider a random variable $X$ that takes values uniformly over a set $K \subseteq \real^n$, then the $L_1$-Fisher information superaddivity from Theorem \ref{thm: l1_super} allows us to derive surface area inequalities once we observe two facts: 
\begin{enumerate}
\item[(a)] The  $L_1$-Fisher information $I_1(X)$ is well-defined for $X$ and is given by a surface integral over $\partial K$, and
\item[(b)] The quantity $I_1(X)_{e_i}$ may be calculated exactly given the sizes of all slices parallel to $e_i^\perp$, or may be lower-bounded by using any finite number of slices parallel to $e_i^\perp$.
\end{enumerate}

Establishing the surface integral result in part (a) requires making some assumptions on the shape of the geometric body. We focus on the class of polyconvex sets \cite{Sch14, KlaRot97}, which are defined as follows:
\begin{definition}
A set $K \subseteq \real^n$ is called a polyconvex set if it can be written as $K = \cup_{i=1}^m C_i$, where $m < \infty$ and each $C_i$ is a compact, convex set in $\real^n$ that has positive volume. Denote the set of polyconvex sets in $\real^n$ by $\cK$.
\end{definition}
In order to make our analysis tractable and rigorous, we first focus on polytopes and prove the polyconvex case by taking a limiting sequence of polytopes.
A precise definition of a polytope is as follows:
\begin{definition}
Define the set of polytopes, denoted by $\cP$ to be all subsets of $\real^n$ such that every $K \in \cP$ admits a representation $K = \cup_{j=1}^m P_j,$ where $m > 0$ and $P_j$ is a compact, convex polytope in $\real^n$ with positive volume for each $1 \leq j \leq m$.
\end{definition}



In what follows, we make observations $(a)$ and $(b)$ precise.

\begin{theorem}\label{thm: l1_surface}
Let $X$ be uniformly distributed over a  polytope $K$. Then the following equality holds:
\begin{equation}\label{eq: eii}
I_1(X) = \frac{1}{V_n(K)}\int_{\partial K} \norm{n(x)}_1 dS.
\end{equation}
\end{theorem}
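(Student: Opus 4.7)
The plan is to express $I_1(X)_{e_i}$ for the uniform distribution on $K$ as a symmetric-difference volume and then to evaluate that volume to leading order in $\epsilon$ using the polyhedral structure of $K$. Writing $f_X = \frac{1}{V_n(K)}\mathbf{1}_K$, the integrand $|f_X(x)-f_X(x-\epsilon e_i)|$ equals $\frac{1}{V_n(K)}$ precisely on the symmetric difference $K \triangle (K+\epsilon e_i)$ and vanishes elsewhere, so
\[
\int_{\real^n}\frac{|f_X(x)-f_X(x-\epsilon e_i)|}{\epsilon}\,dx \;=\; \frac{V_n\bigl(K \triangle (K+\epsilon e_i)\bigr)}{\epsilon\,V_n(K)}.
\]
Thus it suffices to prove that for each $i \in [n]$,
\[
\lim_{\epsilon \to 0^+}\frac{V_n\bigl(K \triangle (K+\epsilon e_i)\bigr)}{\epsilon} \;=\; \int_{\partial K}|n_i(x)|\,dS,
\]
since summing over $i$ and using $\sum_{i=1}^n |n_i(x)| = \norm{n(x)}_1$ then gives \eqref{eq: eii}.

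I would first establish the limit in the case $K=P$ of a single compact convex polytope, with facets $F_1,\dots,F_N$, outward unit normals $\nu_1,\dots,\nu_N$, and hyperplane representation $P = \{y : \langle \nu_j, y\rangle \le c_j \text{ for all } j\}$. Decompose $P \triangle (P+\epsilon e_i) = (P \setminus (P+\epsilon e_i)) \cup ((P+\epsilon e_i)\setminus P)$. A point $y$ belongs to $P\setminus(P+\epsilon e_i)$ iff $y\in P$ and $y-\epsilon e_i\notin P$; combining $\langle \nu_j, y\rangle \le c_j$ with $\langle \nu_j, y-\epsilon e_i\rangle > c_j$ for some $j$ forces $\langle \nu_j, e_i\rangle<0$ and pins $y$ to the slab $\{y \in P : c_j - \epsilon|\langle \nu_j, e_i\rangle| < \langle \nu_j, y\rangle \le c_j\}$, whose volume is $V_{n-1}(F_j)\,\epsilon|\langle \nu_j, e_i\rangle| + O(\epsilon^2)$. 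The symmetric analysis for $(P+\epsilon e_i)\setminus P$ produces analogous slabs over the facets with $\langle \nu_j, e_i\rangle>0$, and after summing, $V_n(P \triangle (P+\epsilon e_i)) = \epsilon\sum_{j=1}^N V_{n-1}(F_j)|\langle \nu_j, e_i\rangle| + O(\epsilon^2)$; since $n(\cdot)$ is piecewise constant on $\partial P$ with value $\nu_j$ on $F_j$, the leading term equals $\epsilon \int_{\partial P} |n_i(x)|\,dS$, as desired.

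For a general polytope $K = \bigcup_{j=1}^m P_j$, I would subdivide $K$ into finitely many convex polytopes $Q_1,\dots,Q_L$ with pairwise disjoint interiors whose union equals $K$, apply the single-polytope slab computation on each $Q_\ell$, and observe that any slab attached to an interior face (one shared by two pieces but not contained in $\partial K$) appears in the symmetric differences of the two adjacent pieces with opposite orientations and so cancels in $V_n(K \triangle (K+\epsilon e_i))$ up to $O(\epsilon^2)$. Only slabs over facets of $\partial K$ survive, reproducing the same limit formula with $\partial K$ in place of $\partial P$, after which Definition \ref{def: l1_fisher} and summation over $i$ yield \eqref{eq: eii}. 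The main obstacle is controlling the lower-dimensional contributions rigorously: both the ``corner'' errors inside each convex piece and the cancellation of interior-face slabs in the union reduce to bounding the volume of the $\epsilon$-neighbourhood of the $(n-2)$-skeleton of the polytopal complex by $O(\epsilon^2)$. Because the complex has only finitely many cells in each dimension, this crude bound suffices and the limit exists.
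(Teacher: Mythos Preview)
Your proposal is correct but takes a different route from the paper after the common starting observation that $I_1(X)_{e_i}$ is governed by the symmetric-difference volume $V_n\bigl(K\triangle(K+\epsilon e_i)\bigr)$. The paper computes this line by line: via Fubini and dominated convergence it shows $I_1(X)_{e_i} = \frac{2}{V_n(K)}\int_{\real^{n-1}} N_i\,dx_2\cdots dx_n$, where $N_i(x_2,\dots,x_n)$ counts the intervals in which the line through $(x_2,\dots,x_n)$ parallel to $e_i$ meets $K$, and then rewrites this integral as $\frac{1}{V_n(K)}\int_{\partial K}|n(x)\cdot e_i|\,dS$ by summing the indicators of the projections of individual facets onto $e_i^\perp$. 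You instead evaluate the symmetric-difference volume directly via facet slabs, first for a single convex polytope and then for a union by subdividing $K$ into a polytopal complex and cancelling interior-face contributions. The paper's Cavalieri-type argument is more analytic and sidesteps the subdivision entirely---once the interval count $N_i$ is in hand there is nothing to cancel---whereas your approach is more explicitly geometric and makes the facet contributions visible from the outset; the trade-off is that your union step needs extra bookkeeping (e.g.\ writing $\mathbf{1}_K=\sum_\ell\mathbf{1}_{Q_\ell}$ a.e.\ and comparing $\bigl|\sum_\ell a_\ell\bigr|$ with $\sum_\ell|a_\ell|$ pointwise) to verify that the interior-face slabs cancel exactly to leading order.
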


\begin{proof}[Proof of Theorem \ref{thm: l1_surface}]
The equality in \eqref{eq: eii} is not hard to see intuitively. Consider the set $K$ and its perturbed version $K_\epsilon$ that is obtained by translating $K$ in the direction of $e_i$ by $\epsilon$. The $L_1$ distance between the uniform distributions on $K$ and $K_\epsilon$ is easily seen to be 
\begin{align*}
\frac{1}{V_n(K)} \left(V_n(K \cup K_\epsilon) - V_n(K \cap K_\epsilon)\right).
\end{align*}
\begin{figure}[ht]
			\centering
			\includegraphics[width=0.4\textwidth]{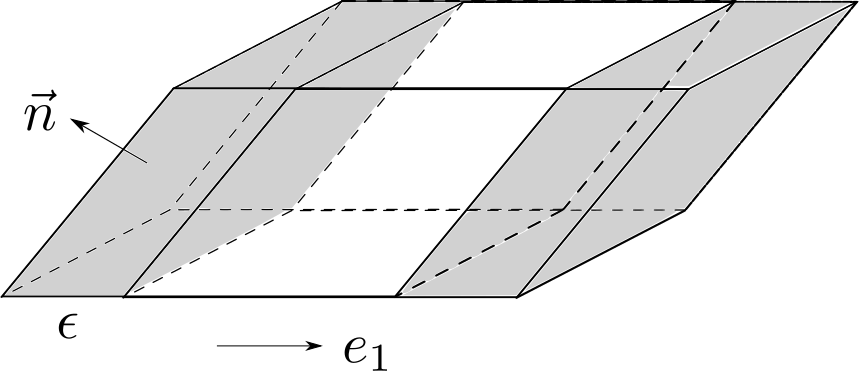}
			\caption{Perturbing a set by  $\epsilon$}
			\label{fig: surface projection}
		\end{figure}
As shown in Figure \ref{fig: surface projection}, each small patch $dS$ contributes $\abs{n(x) \cdot e_i} dS$ volume to $(K \cup K_\epsilon) \setminus (K \cap K_\epsilon)$, where $n(x)$ is the normal to the surface at $dS$. Summing up over all such patches $dS$ yields the desired conclusion. We make this proof rigorous with the aid of two lemmas:

\begin{lemma}[Proof in Appendix \ref{proof: lemma: I_1 in N_i}]
	\label{lemma: I_1 in N_i}
	Let $X$ be uniformly distributed over a compact measurable set $K \subseteq \mathbb{R}^n$.  If there exists an integer $L$ such that the intersection between $K$ and any straight line can be divided into at most $L$ disjoint closed intervals, then 
	\begin{equation}
		\label{eqn: I_1(X)_i in N_i}	
		I_1(X)_{e_i} = \int_{\mathbb{R}^{n-1}} \frac{2N_i(\ldots, \widehat{x_i},\ldots)}{V_n(K)}dx_1 \ldots \widehat{dx_i}\ldots dx_n.
	\end{equation}
	Here $\widehat{x_i}$ stands for removing $x_i$ from the expression.  The function $N_i(\ldots,\widehat{x_i},\ldots)$ is the number of disjoint closed invervals of the intersection of $K$ and line $\{ X_j = x_j, \: 1\le j \le n, j \neq i\}$.  
\end{lemma}
The above lemma does not require $K$ to be a polytope. However, the surface integral Lemma \ref{cor: I_1 as surface projection} below uses this assumption. 

\begin{lemma}[Proof in Appendix \ref{proof: cor: I_1 as surface projection}]
	\label{cor: I_1 as surface projection}
	Let $X$ be uniform over a polytope $K \in \cP$. Then
\[		
	\int_{\mathbb{R}^{n-1}} \frac{2N_i(\ldots, \widehat{x_i},\ldots)}{V_n(K)}dx_1 \ldots \widehat{dx_i}\ldots dx_n   = \frac{1}{V_n(K)}\int_{\partial K} |n(x) \cdot {e_i}| dS.
	\]
	Here  ${n}(x)$ is the normal vector at point $x \in\partial K$ and $dS$ is the element for surface area.  
\end{lemma}
Lemmas \ref{lemma: I_1 in N_i} and \ref{cor: I_1 as surface projection} immediately yield the desired conclusion, since $I_1(X) = \sum_{i=1}^n I_1(X)_{e_i}$
and $\norm{n(x)}_1 = \sum_{i=1}^n \abs{n(x) \cdot e_i}.$ 
\end{proof}

Our goal now is to connect $I_1(X_i)$ to the size of the slices of $K$ along $e_i^\perp$.  

\subsection{$L_1$-Fisher information via slices}

Consider the marginal density of $X_1$, which we denote by $f_{X_1}$. It is easy to see that for each $x_1 \in \supp f_{X_1}$, we have
\begin{equation}
f_{X_1}(x_1) = \frac{V_{n-1}(K \cap (e_1^\perp + x_1))}{V_n(K)}.
\end{equation}
Thus, the distribution of $X_1$ is determined by the slices of $K$ by hyperplanes parallel to $e_1^\perp$. Since Theorem \ref{thm: l1_super} is expressed in terms of $I_1(X_i)$, where each $X_i$ is a real-valued random variable, we establish a closed form expression for real-valued random variables in terms of their densities as follows:
\begin{lemma}[Proof in Appendix \ref{proof: thm: I_1 single variable}]
\label{thm: I_1 single variable}
Let $X$ be a continuous real-valued random variable with density $f_X$. If we can find $
 	-\infty=a_0< a_1 < \ldots < a_{M+1} = \infty$
 such that (a) $f_X$ is continuous and monotonic on each open interval $(a_i, a_{i+1})$; (b) For $i=0, \ldots ,M$, the limits
\begin{align*}
	f(a_i^+) &= \lim_{x \rightarrow a_i^+} f_X(x) \text{ for } i=1,\ldots,M, \quad \text{ and } \\
	f(a_i^-) &= \lim_{x \rightarrow a_i^-} f_X(x) \text{ for } i=1,\ldots,M
\end{align*}
exist and are finite. 
Then
	
\begin{equation}
	\label{eqn: I_1 as sum of limits}
		I_1(X) = \sum_{i=0}^M |f(a_{i+1}^-) - f(a_i^+)| + \sum_{i=1}^{M} |f(a_i^+) - f(a_i^-)|.
\end{equation}	
\end{lemma}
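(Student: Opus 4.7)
}

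The plan is to split $\real$ into the open intervals of monotonicity $(a_i, a_{i+1})$ and thin neighborhoods of the jump points, compute the contribution to $I_1(X)$ from each piece, and then sum. First I would write
\[
\int_\real \frac{|f_X(x) - f_X(x-\epsilon)|}{\epsilon}\,dx = \sum_{i=0}^{M} \frac{1}{\epsilon}\int_{a_i+\epsilon}^{a_{i+1}} |f_X(x) - f_X(x-\epsilon)|\,dx + \sum_{i=1}^{M} \frac{1}{\epsilon}\int_{a_i}^{a_i+\epsilon} |f_X(x) - f_X(x-\epsilon)|\,dx,
\]
where the first sum collects the ``bulk'' contributions and the second collects the ``jump'' contributions. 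For the endpoint conventions, since $f_X$ is a monotonic integrable density on $(-\infty, a_1)$ and on $(a_M, \infty)$, it follows that $f(a_0^+) := \lim_{x\to-\infty} f_X(x) = 0$ and $f(a_{M+1}^-) := \lim_{x\to\infty} f_X(x) = 0$.

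For the bulk pieces I would use monotonicity of $f_X$ on $(a_i, a_{i+1})$: for every $x \in (a_i+\epsilon, a_{i+1})$ both $x$ and $x-\epsilon$ lie in that open interval, so $f_X(x) - f_X(x-\epsilon)$ has a constant sign. Pulling the sign outside the integral and making the change of variable $y = x - \epsilon$ in one of the resulting integrals gives
\[
\int_{a_i+\epsilon}^{a_{i+1}} |f_X(x) - f_X(x-\epsilon)|\,dx = \left| \int_{a_{i+1}-\epsilon}^{a_{i+1}} f_X(x)\,dx - \int_{a_i}^{a_i+\epsilon} f_X(x)\,dx \right|.
\]
Dividing by $\epsilon$ and invoking the one-sided limits at $a_{i+1}^-$ and $a_i^+$ (which exist by hypothesis) yields $|f(a_{i+1}^-) - f(a_i^+)|$ in the limit $\epsilon \to 0^+$, producing the first sum in \eqref{eqn: I_1 as sum of limits}.

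For each jump piece $(a_i, a_i+\epsilon)$, note that $x \in (a_i, a_i+\epsilon)$ implies $x - \epsilon \in (a_i-\epsilon, a_i)$, so the existence and finiteness of the one-sided limits gives, for every $\delta > 0$, an $\epsilon_0 > 0$ such that $|f_X(x) - f(a_i^+)| < \delta$ and $|f_X(x-\epsilon) - f(a_i^-)| < \delta$ uniformly in $x$ for all $\epsilon < \epsilon_0$. By the triangle inequality $|f_X(x) - f_X(x-\epsilon)|$ differs from $|f(a_i^+) - f(a_i^-)|$ by at most $2\delta$ on the interval of length $\epsilon$, so dividing by $\epsilon$ and sending $\epsilon \to 0^+$ yields exactly $|f(a_i^+) - f(a_i^-)|$. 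Summing the bulk and jump contributions gives \eqref{eqn: I_1 as sum of limits}. The main technical point requiring care is justifying that the limits of the bulk and jump pieces can be taken independently and then added; this is handled by the fact that only finitely many pieces are involved, so the decomposition of the integrand into the above $M+1$ bulk plus $M$ jump pieces is an exact identity for every $\epsilon$ small enough that the intervals $(a_i, a_i+\epsilon)$ are disjoint, and one simply takes $\epsilon \to 0^+$ term by term.
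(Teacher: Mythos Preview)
Your proposal is correct and follows essentially the same approach as the paper's proof: both decompose $\real$ into the bulk intervals $(a_i+\epsilon,a_{i+1})$ and the jump neighborhoods $(a_i,a_i+\epsilon)$, use monotonicity to telescope the bulk integrals into $\frac{1}{\epsilon}\big(\int_{a_{i+1}-\epsilon}^{a_{i+1}} f_X - \int_{a_i}^{a_i+\epsilon} f_X\big)$, and then pass to the one-sided limits. The only cosmetic difference is that the paper handles the jump pieces via the integral mean value theorem whereas you use a direct $\delta$--$\epsilon$ uniform approximation, which is arguably cleaner; your explicit remark that $f(a_0^+)=f(a_{M+1}^-)=0$ by integrability of a monotone density on a half-line is a point the paper leaves implicit.
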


We can see that the first sum in \eqref{eqn: I_1 as sum of limits} captures the change of function values on each monotonic interval and the second term captures the difference of the one-sided  limits at end points. The following two corollaries are immediate.

\begin{corollary}
\label{cor: step function}
	Let $X$ be uniformly distributed on finitely many disjoint closed intervals; i.e., there exist disjoint intervals $[a_i, b_i] \subseteq \mathbb{R}$ for $i \in [N]$ and $\tau \in \mathbb{R}$
	such that
	\[
		f_X(x) = 
		\begin{cases}
			\tau & x \in \cup_{i=1}^{N} [a_i, b_i],  \quad \text { and }\\
			0 & \text{otherwise,}
		\end{cases}
	\]
	then $I_1(X) = 2N\tau.$
\end{corollary}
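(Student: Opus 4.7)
The plan is to apply Lemma \ref{thm: I_1 single variable} directly, since the hypotheses of that lemma are easily verified for a step-function density of the form given.

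First I would, without loss of generality, order the disjoint closed intervals so that $a_1 < b_1 < a_2 < b_2 < \cdots < a_N < b_N$ (the disjointness forces strict separation $b_i < a_{i+1}$, since the density has a single positive level $\tau$). I would then take the partition in Lemma \ref{thm: I_1 single variable} to consist of the $M = 2N$ endpoints, relabeled as $t_1 = a_1,\, t_2 = b_1,\, t_3 = a_2,\, \ldots,\, t_{2N-1} = a_N,\, t_{2N} = b_N$, together with $t_0 = -\infty$ and $t_{2N+1} = +\infty$. On every open interval $(t_j, t_{j+1})$ the density $f_X$ is constant (either $\tau$ or $0$), so it is trivially continuous and monotonic there, and all one-sided limits $f(t_j^\pm)$ exist and are finite, equal to $0$ or $\tau$.

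Next I would plug into \eqref{eqn: I_1 as sum of limits}. The first sum $\sum_{i=0}^{M} |f(t_{i+1}^-) - f(t_i^+)|$ vanishes because $f_X$ is constant on each open subinterval of the partition, so the one-sided limits inside a single open piece agree. For the second sum $\sum_{i=1}^{M} |f(t_i^+) - f(t_i^-)|$, each breakpoint $t_i$ is either a left endpoint $a_k$ (jump from $0$ to $\tau$) or a right endpoint $b_k$ (jump from $\tau$ to $0$); in either case the jump has magnitude $\tau$. Since there are exactly $2N$ such breakpoints, this sum equals $2N\tau$, and combining the two contributions gives $I_1(X) = 2N\tau$.

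There is no serious obstacle here: the only mild subtlety is bookkeeping to ensure that the partition points correctly capture every discontinuity of $f_X$ and that the hypotheses of Lemma \ref{thm: I_1 single variable} are met on each open piece. Everything else reduces to counting the $2N$ unit jumps of size $\tau$.
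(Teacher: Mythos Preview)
Your proposal is correct and is exactly the intended argument: the paper states this corollary as immediate from Lemma~\ref{thm: I_1 single variable}, and you have simply written out the verification that the first sum in \eqref{eqn: I_1 as sum of limits} vanishes (constancy on each piece) while the second sum contributes $\tau$ at each of the $2N$ endpoints.
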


\begin{corollary}
\label{cor: I_1 for unimodal}
Let $X$ be a real-valued random variable with unimodal piecewise continuous density function $f_X$. Then the following equality holds:
\begin{align}
I_1(X) = 2\norm{f}_\infty.
\end{align}
\end{corollary}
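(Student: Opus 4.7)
The plan is to apply Lemma \ref{thm: I_1 single variable} with a partition of $\mathbb{R}$ that places the mode of $f_X$ at one of the breakpoints. Because $f_X$ is unimodal and piecewise continuous, I can select a point $c \in \mathbb{R}$ at which $f_X$ attains its essential supremum and refine the piecewise-continuity partition to breakpoints $-\infty = a_0 < a_1 < \cdots < a_M < a_{M+1} = \infty$ satisfying the lemma's hypotheses, with $a_k = c$ for some $1 \leq k \leq M$. On each interval $(a_i, a_{i+1})$ with $i < k$ the function is continuous and non-decreasing, and for $i \geq k$ it is continuous and non-increasing. Since $f_X$ is integrable and monotone near $\pm\infty$, its tails must vanish, giving $f(a_0^+) = f(a_{M+1}^-) = 0$.

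Next I will read off the signs of every term in
$$I_1(X) = \sum_{i=0}^{M} |f(a_{i+1}^-) - f(a_i^+)| + \sum_{i=1}^{M} |f(a_i^+) - f(a_i^-)|.$$
Monotonicity gives $f(a_{i+1}^-) - f(a_i^+) \geq 0$ for $i < k$ and $\leq 0$ for $i > k$. Unimodality forces every interior jump at $a_i$ with $i \neq k$ to align with the monotonic trend, so $f(a_i^+) - f(a_i^-) \geq 0$ for $i < k$ and $\leq 0$ for $i > k$. After removing absolute values according to these signs, the two sums telescope separately on each side of $c$: the terms with $i \in \{0, \dots, k-1\}$ collapse to $f(a_k^-) = f(c^-)$, and the terms with $i \in \{k, \dots, M\}$ collapse to $f(a_k^+) = f(c^+)$. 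The remaining contribution is the central jump term $|f(c^+) - f(c^-)|$ coming from $i = k$ in the second sum.

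Putting the three pieces together yields
$$I_1(X) = f(c^-) + f(c^+) + |f(c^+) - f(c^-)| = 2 \max\{f(c^-), f(c^+)\}.$$
Because $f_X$ is unimodal with mode at $c$, its essential supremum equals $\max\{f(c^-), f(c^+)\}$, so the right-hand side is $2\|f\|_\infty$, which is the claimed identity.

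The main obstacle is purely combinatorial bookkeeping: one has to track the signs of the monotonic increments and of the internal jumps on each side of $c$ so that the telescoping cleanly separates the two contributions $f(c^-)$ and $f(c^+)$ from the jump at $c$ itself. Once the signs are handled correctly the telescoping is immediate, but mixing up a jump direction on the decreasing side would spuriously cancel the very term that produces $\|f\|_\infty$.
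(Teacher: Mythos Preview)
Your proof is correct and follows exactly the approach the paper intends: the paper simply declares the corollary ``immediate'' from Lemma~\ref{thm: I_1 single variable}, and your telescoping computation is the natural way to make that immediacy explicit. The bookkeeping of signs on each side of the mode and the observation that integrability plus monotonicity forces the tails to vanish are precisely the details one fills in.
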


Lemma \ref{thm: I_1 single variable} gives an explicit expression to compute $I_1$ when we know the whole profile of $f_X$.  When $f_X$ is only known for certain values $x$, we are able to establish a lower bound for $I_1(X)$. Note that knowing $f_X$ for only certain values corresponds to knowing the sizes of slices along a certain hyperplanes.

\begin{corollary}
	\label{cor: I_1 single var ineq}
Let $X \sim f_X$ where $f_X$ is as in Lemma \ref{thm: I_1 single variable}. If there exists a set  
	\[
		S = \{ -\infty = \theta_0 < \theta_1 < \ldots< \theta_N < \theta_{N+1} = \infty\}
	\]
	such that $f_X$ is continuous at each $\theta_i$ for $i \in [N]$, then
	
		\[
		I_1(X)  \ge \sum_{i=0}^N |f(\theta_{i+1}) - f(\theta_i)|.
	\]
\end{corollary}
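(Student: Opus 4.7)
\textbf{Proof proposal for Corollary \ref{cor: I_1 single var ineq}.}
The plan is to show that the sum on the right hand side is bounded above by the total variation of $f_X$, as captured by the formula from Lemma \ref{thm: I_1 single variable}, by refining the partition $\{a_i\}$ with the additional points $\{\theta_j\}$. First I would merge the sorted points $\{a_i\}_{i=0}^{M+1}$ and $\{\theta_j\}_{j=0}^{N+1}$ into a single sorted sequence $-\infty = c_0 < c_1 < \cdots < c_{K+1} = \infty$. Since each $\theta_j$ lies either strictly inside some $(a_i, a_{i+1})$ where $f_X$ is continuous and monotonic, or coincides with some $a_i$ at which $f_X$ is continuous (forcing $f(a_i^+) = f(a_i^-)$), the restriction of $f_X$ to each open interval $(c_k, c_{k+1})$ is still continuous and monotonic, and the one-sided limits at each $c_k$ exist and are finite. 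Moreover, we have $\lim_{x \to -\infty} f_X(x) = \lim_{x \to \infty} f_X(x) = 0$ since $f_X$ is monotonic on $(-\infty, a_1)$ and $(a_M, \infty)$ and integrable.

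Next, I would verify the key \emph{refinement identity}: applying formula \eqref{eqn: I_1 as sum of limits} to the refined partition $\{c_k\}$ gives the same value $I_1(X)$ as applying it to $\{a_i\}$. The reason is that if $\theta_j \in (a_i, a_{i+1})$ is a newly inserted point then $f_X$ is monotonic on both subintervals $(a_i, \theta_j)$ and $(\theta_j, a_{i+1})$ with the same monotonicity direction and $f_X$ is continuous at $\theta_j$, so
\[
|f(\theta_j^-) - f(a_i^+)| + |f(a_{i+1}^-) - f(\theta_j^+)| + |f(\theta_j^+) - f(\theta_j^-)| = |f(a_{i+1}^-) - f(a_i^+)|,
\]
and if $\theta_j = a_i$ then the jump term $|f(a_i^+) - f(a_i^-)|$ already vanishes by continuity.

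Now I would group the terms of the refined-partition formula according to which $\theta$-subinterval $[\theta_j, \theta_{j+1}]$ they fall in. On each such subinterval, both the interval-contribution terms and the jump terms are of the form $|f(y) - f(x)|$ for consecutive sample points (where for jumps the ``consecutive points'' are $c_k^-$ and $c_k^+$). By the triangle inequality applied along this telescoping chain from $\theta_j$ to $\theta_{j+1}$, and using the continuity of $f_X$ at the endpoints $\theta_j, \theta_{j+1}$, the sum of these contributions is at least $|f(\theta_{j+1}) - f(\theta_j)|$. Summing over $j = 0, 1, \ldots, N$ (with $f(\theta_0) = f(\theta_{N+1}) = 0$) yields the claimed inequality.

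The only delicate step is the refinement identity and the corresponding triangle-inequality argument at the inserted points: one must track signs carefully to ensure that splitting a monotonic interval at a continuity point does not change the total-variation sum. Once this bookkeeping is handled cleanly, the result follows by a one-line application of the triangle inequality on each $[\theta_j, \theta_{j+1}]$.
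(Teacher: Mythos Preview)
Your proposal is correct and follows essentially the same route as the paper: merge the partition $\{a_i\}$ with the sample points $\{\theta_j\}$, verify that the formula \eqref{eqn: I_1 as sum of limits} is unchanged on the refined partition (using monotonicity on each $(a_i,a_{i+1})$ and continuity of $f_X$ at each $\theta_j$), and then apply the triangle inequality along the telescoping chain on each $[\theta_j,\theta_{j+1}]$. The paper's proof is organized in exactly this way.
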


\begin{proof}
We can find $T=\{a_i |i = 0, \ldots, M+1\}$ where $a_0 = \theta_0 = -\infty$, $a_{M+1} = \theta_{N+1} = +\infty$ such that they satisfy the conditions in Lemma \ref{thm: I_1 single variable}, and
	\begin{equation}
		I_1(X) = \sum_{i=0}^M |f_X(a_{i+1}^-) - f_X(a_i^+)| + \sum_{i=1}^{M} |f_X(a_i^+) - f_X(a_i^-)|.
		\label{eqn: I_1 continuous}
	\end{equation}
	Consider the set  $S \cup T = \{c_0 , \dots, c_{L+1}\}  $, which divides $\mathbb{R}$ into subintervals
	\[
		(c_i, c_{i+1}) \quad \text{ for } 0 \leq i \leq L+1.
	\]
We claim that 
\begin{equation}\label{eq: cc}
I_1(X) = \sum_{i=0}^L |f_X(c_{i+1}^-) - f_X(c_i^+)| + \sum_{i=1}^{L} |f_X(c_i^+) - f_X(c_i^-)|.	
\end{equation}
For the second term, note that
\begin{equation}
\sum_{i=1}^{L} |f_X(c_i^+) - f_X(c_i^-)| = \sum_{i=1}^{M} |f_X(a_i^+) - f_X(a_i^-)|,
\end{equation}
since $f_X$ is assumed to be continuous at $\theta_i$ for $i \in [N]$. The points in $S \setminus T$ subdivide each of the intervals $(a_i, a_{i+1})$;
i.e., for each interval $(a_i, a_{i+1})$ we can find an index $j_0$ such that $a_i = c_{j_0} < c_{j_0+1} < \dots < c_{j_0+ r} < c_{j_0+r+1} = a_{i+1}$, and the monotonicity of the function over $(a_i, a_{i+1})$ gives
\begin{equation}
\abs{f_X(a_i^+) - f_X(a_{i+1}^-)} = \sum_{j=0}^r \abs{f_X(c_{j_0+j+1}^-) -f_X(c_{j_0+j}^+)}.
\end{equation} 
Summing up over all intervals yields equality \eqref{eq: cc}. 
To conclude the proof, note that $f_X$ is not necessarily monotonic in the interval $(\theta_i, \theta_{i+1})$. Thus, if we have indices $\theta_i = c_{k_0}< \dots < c_{k_0+s+1} = \theta_{i+1}$, the triangle inequality yields
\begin{align*}
\abs{f_X(\theta_{i+1}) - f_X(\theta_i)}  &\stackrel{(a)}= \abs{f_X(\theta_{i+1}^-) - f_X(\theta_i^+)}\\
&= \Big| \sum_{u = 0}^s f_{X}(c_{k_0+u}^+) - f_{X}(c_{k_0+u+1}^-) + \sum_{u=1}^s f_X(c_{k_0+u}^-) - f_X(c_{k_0+u}^+)\Big|\\
&\leq \sum_{u = 0}^s \abs{f_{X}(c_{k_0+u}^+) - f_{X}(c_{k_0+u+1}^-)}+ \sum_{u=1}^s \abs{f_X(c_{k_0+u}^-) - f_X(c_{k_0+u}^+)}. 
\end{align*}
Here, equality $(a)$ follows from the continuity of $f_X$ at the points in $S$. Performing the above summation over all intervals $(\theta_i,\theta_{i+1})$ for $0\leq i \leq N$, and using equality \eqref{eq: cc}, we may conclude the inequality
	\[
		I_1(X) \ge \sum_{i=0}^N |f(\theta_i) - f(\theta_{i+1})|.
	\]
	\end{proof}
	\begin{remark}\label{remark: cont}
	Suppose $K$ is the union of two squares joined at the corner as shown in Figure \ref{fig: continuity}. Let $X$ be uniformly distributed on $K$. Suppose also that the slice of $K$ is known only at $\theta_1$. By direct calculation, we have $I_1(X \cdot e_1) = 2$, since $X \cdot e_1$ is uniform over $[0,1]$. Notice that $f_{X \cdot e_1}(\theta_1) = 2$, and thus the bound from Corollary \ref{cor: I_1 single var ineq} is $4$, which is \emph{larger} than $I_1(X \cdot e_1)$. This reversal is due to the discontinuity of $f_{X \cdot e_1}$ at the sampled location $\theta_1$---$f_{X \cdot e_1}(\theta_1)$ equals neither the left limit or the right limit at $\theta_1$. To avoid such scenarios, we require continuity of the density at sampled points. 
	\begin{figure}
	\begin{center}
	\includegraphics[scale = 0.5]{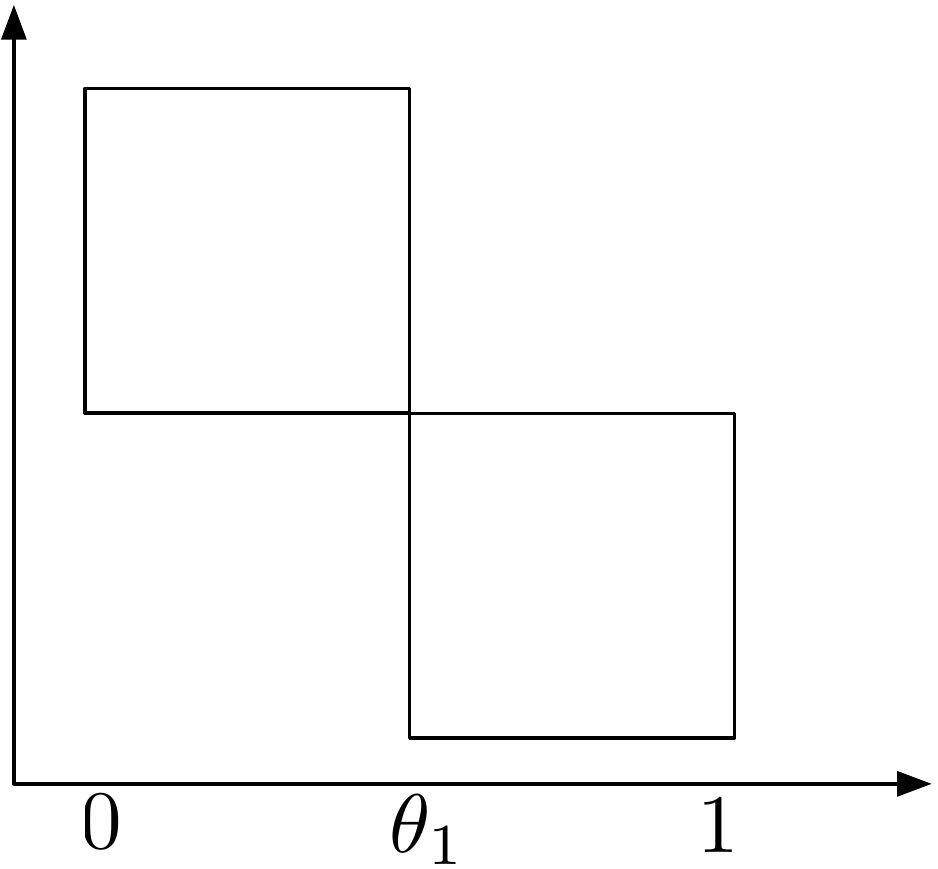}\caption{Uniform distribution over a union of squares}\label{fig: continuity}
	\end{center}
	\end{figure}

		\end{remark}

Corollary \ref{cor: I_1 single var ineq} shows that under mild conditions, we can estimate $I_1(X)$ when only limited information is known about its density function.

\subsection{Procedure to obtain lower bounds on the surface area} 
We first verify that the assumptions required by Lemma \ref{thm: I_1 single variable} are satisfied by the marginals of uniform densities over polytopes.
\begin{lemma}[Proof in Appendix \ref{proof: lemma: verify}]\label{lemma: verify}
Suppose $X = (X_1, \dots, X_n)$ is uniformly distributed over a polytope $K \in \cP$. Let $u$ be any unit vector and let $f_{X \cdot u}$ be the marginal density of $X \cdot u$. Then $f_{X\cdot u}(\cdot)$ satisfies in Lemma \ref{thm: I_1 single variable}.
\end{lemma}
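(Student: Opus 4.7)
The plan is to show that $f_{X\cdot u}$ is a piecewise polynomial with finitely many pieces, after which both conditions of Lemma \ref{thm: I_1 single variable} follow almost immediately. Let $K = \cup_{j=1}^m P_j$ be a representation of $K$ as a union of compact convex polytopes. By inclusion--exclusion,
\begin{equation*}
\mathbf{1}_K(x) \;=\; \sum_{\emptyset \neq S \subseteq [m]} (-1)^{|S|+1} \mathbf{1}_{Q_S}(x), \qquad Q_S := \bigcap_{j \in S} P_j,
\end{equation*}
where each $Q_S$ is either empty or a compact convex polytope. Consequently, for any $t \in \real$,
\begin{equation*}
V_n(K) \cdot f_{X\cdot u}(t) \;=\; \sum_{\emptyset \neq S \subseteq [m]} (-1)^{|S|+1}\, V_{n-1}\bigl(\{x \in Q_S : x\cdot u = t\}\bigr).
\end{equation*}
So it suffices to analyze a single cross-section function $g_Q(t) := V_{n-1}(\{x \in Q : x\cdot u = t\})$ for a convex polytope $Q$.

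Next, I would invoke the standard fact from polyhedral geometry that $g_Q$ is a piecewise polynomial in $t$ of degree at most $n-1$, with breakpoints contained in the finite set $\{v\cdot u : v \text{ a vertex of } Q\}$. Intuitively, between two consecutive critical values of $t$, the combinatorial type of the slice $Q \cap \{x\cdot u = t\}$ is constant and its vertices move linearly in $t$, so its $(n-1)$-volume is polynomial in $t$; one can also derive this from a triangulation of $Q$ and the Cavalieri-type formula for a simplex. Summing over the finitely many subsets $S$ gives that $f_{X\cdot u}(t)$ is itself piecewise polynomial with finitely many breakpoints, say $\beta_1 < \beta_2 < \cdots < \beta_K$.

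To conclude, I would further refine the partition $\{\beta_i\}$ by inserting the (finitely many) critical points of each polynomial piece. The enlarged set of breakpoints $-\infty = a_0 < a_1 < \cdots < a_{M+1} = \infty$ then satisfies condition (a) of Lemma \ref{thm: I_1 single variable}: on each $(a_i, a_{i+1})$ the density $f_{X\cdot u}$ coincides with a polynomial, hence is continuous, and no interior critical point of that polynomial lies in $(a_i, a_{i+1})$, so it is monotonic. Condition (b) is immediate, since the one-sided limits $f_{X\cdot u}(a_i^{\pm})$ are just the values of the adjacent polynomials at $a_i$, which are finite.

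The only genuinely nontrivial ingredient is the piecewise-polynomial structure of $g_Q$, which is classical but deserves a sentence of justification; everything else (the inclusion--exclusion decomposition, subdivision at critical points, and finiteness of one-sided limits) is bookkeeping. I expect no serious obstacle, though one must take care that the density one works with is genuinely the marginal along $u$ (not along a coordinate axis), which is why the argument is phrased intrinsically in terms of hyperplanes perpendicular to $u$ rather than coordinate hyperplanes.
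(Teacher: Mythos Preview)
Your proposal is correct and follows essentially the same route as the paper: inclusion--exclusion over the convex polytopes $Q_S=\bigcap_{j\in S}P_j$, reduction to the cross-section function $g_{Q}(t)$ of a single convex polytope, piecewise polynomiality of $g_Q$ (the paper obtains this by explicit triangulation into simplices, which you mention as one possible justification), and then a refinement of the breakpoints by the finitely many critical points of each polynomial piece to force monotonicity. The only cosmetic difference is that the paper first sets $u=e_1$ without loss of generality, whereas you phrase the argument intrinsically in terms of hyperplanes orthogonal to $u$.
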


Now suppose $X = (X_1, \dots, X_n)$ is uniformly distributed over a polytope $K$. Since $K$ is a polytope, we may write $K = \cup_{i=1}^m P_i$ where each $P_i$ is a compact, convex polytope.  Theorem \ref{thm: l1_super} provides the lower bound:
\begin{align*}
\frac{1}{V_n(K)}\int_{\partial K} \norm{n(x)}_1 dS \geq \sum_{i=1}^n I_1(X_i).
\end{align*}
To derive surface area bounds, notice that 
\begin{align*}
\sqrt n = \sqrt n \norm{n(x)}_2 \geq \norm{n(x)}_1,
\end{align*}
and thus
\begin{equation}
\frac{V_{n-1}(\partial K)}{V_n(K)} \geq \frac{1}{\sqrt n} \sum_{i=1}^n I_1(X_i).
\label{eqn: surface area > I_1}
\end{equation}

Suppose we know the sizes of some finite number of slices by hyperplanes parallel to $e_i^\perp$ for $i \in [n]$. We may use Corollary \ref{cor: I_1 single var ineq} to obtain lower bounds $\frac{B_i}{V_{n}(K)}$ on $I_1(X_i)$ for each $i \in [n]$ using the available slice information. This leads to the lower bound 
\begin{align*}
\frac{V_{n-1}(\partial K)}{V_n(K)} \geq \frac{1}{\sqrt n} \sum_{i=1}^n \frac{B_i}{V_{n}(K)},
\end{align*}
and thereby we may conclude the lower bound 
$$V_{n-1}(\partial K) \geq \frac{1}{\sqrt n} \sum_{i=1}^n {B_i}.$$ This is made rigorous in the following result, which may be considered to be our main result concerning surface areas.
\begin{theorem}\label{thm: surface_area}
Let $K$ be a polyconvex set. For $i \in [n]$, suppose that we have $M_i \geq 0$ slices of $K$ obtained by hyperplanes parallel to $e_i^\perp$, with sizes $\alpha^i_1, \dots, \alpha^i_{M_i}$. Then the surface area of $K$ is lower-bounded by
\begin{equation}\label{eq: poly}
V_{n-1}(\partial K) \geq \frac{1}{\sqrt n} \sum_{i=1}^n \left( \sum_{j=0}^{M_i} \abs{\alpha^i_j - \alpha^i_{j+1}}\right),
\end{equation}
where $\alpha^i_0, \alpha^i_{M_i+1} = 0$ for all $i \in [n]$.
\end{theorem}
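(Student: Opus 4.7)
My plan is to first establish inequality \eqref{eq: poly} for polytopes, where all the tools developed in this section apply directly, and then to obtain the polyconvex case through a limiting argument using a sequence of approximating polytopes. The polytope case is essentially a chaining together of the previously established results, while the limiting argument requires a little care concerning convergence of surface area and slice sizes.

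For the polytope case, let $K \in \cP$ and let $X \sim \text{Unif}(K)$. Lemma \ref{lemma: verify} guarantees that each marginal $f_{X_i}$ satisfies the hypotheses of Lemma \ref{thm: I_1 single variable}, so $I_1(X_i)$ is well-defined. Assuming the given slice locations $\theta^i_1, \dots, \theta^i_{M_i}$ lie at points where $f_{X_i}$ is continuous (one can perturb them infinitesimally if not, using that the discontinuity set of each $f_{X_i}$ is finite for a polytope, and invoke the Remark \ref{remark: cont} caveat), Corollary \ref{cor: I_1 single var ineq} yields
\begin{equation*}
I_1(X_i) \;\geq\; \sum_{j=0}^{M_i} \bigl|f_{X_i}(\theta^i_{j+1}) - f_{X_i}(\theta^i_j)\bigr| \;=\; \frac{1}{V_n(K)} \sum_{j=0}^{M_i} |\alpha^i_j - \alpha^i_{j+1}|,
\end{equation*}
using the identity $f_{X_i}(\theta) = V_{n-1}(K \cap (e_i^\perp + \theta e_i)) / V_n(K)$ and the boundary convention $\alpha^i_0 = \alpha^i_{M_i+1} = 0$. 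Summing over $i$ and applying the superadditivity statement of Theorem \ref{thm: l1_super} gives
\begin{equation*}
I_1(X) \;\geq\; \sum_{i=1}^n I_1(X_i) \;\geq\; \frac{1}{V_n(K)} \sum_{i=1}^n \sum_{j=0}^{M_i} |\alpha^i_j - \alpha^i_{j+1}|.
\end{equation*}
On the other hand, Theorem \ref{thm: l1_surface} combined with the pointwise bound $\|n(x)\|_1 \leq \sqrt n \,\|n(x)\|_2 = \sqrt n$ yields
\begin{equation*}
I_1(X) \;=\; \frac{1}{V_n(K)} \int_{\partial K} \|n(x)\|_1\, dS \;\leq\; \frac{\sqrt n\, V_{n-1}(\partial K)}{V_n(K)}.
\end{equation*}
Chaining these two inequalities and canceling $V_n(K)$ establishes \eqref{eq: poly} for polytopes.

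For a general polyconvex set $K = \bigcup_{i=1}^m C_i$, I would approximate each compact convex $C_i$ from inside and outside by polytopes, taking $K^{(\ell)} \in \cP$ with $K^{(\ell)} \to K$ in Hausdorff distance. Standard results on the continuity of intrinsic volumes under Hausdorff convergence (see \cite{Sch14}) give $V_n(K^{(\ell)}) \to V_n(K)$ and $V_{n-1}(\partial K^{(\ell)}) \to V_{n-1}(\partial K)$ for suitable approximating sequences. Likewise, for all but countably many hyperplane positions parallel to each $e_i^\perp$, the slice sizes $V_{n-1}(K^{(\ell)} \cap (e_i^\perp + \theta e_i))$ converge to $V_{n-1}(K \cap (e_i^\perp + \theta e_i))$; the finite set of sampled positions $\theta^i_j$ can be assumed to be among these good locations by an infinitesimal perturbation argument, or handled by a density argument on the slice positions. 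Applying the polytope bound to each $K^{(\ell)}$ and passing to the limit then yields \eqref{eq: poly} for $K$.

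The main technical obstacle I anticipate is the limiting step: controlling surface area under Hausdorff convergence is delicate in general (it is only lower semicontinuous, not continuous, without convexity hypotheses), so the approximation scheme must be constructed carefully enough that $V_{n-1}(\partial K^{(\ell)}) \to V_{n-1}(\partial K)$. Inside each convex piece $C_i$ this follows from Minkowski's continuity theorem for intrinsic volumes, so the polyconvex decomposition together with a careful handling of shared boundary contributions between overlapping pieces should suffice, but this is where the proof will require the most attention to detail.
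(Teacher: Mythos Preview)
Your proposal is correct and follows essentially the same route as the paper: chain Corollary~\ref{cor: I_1 single var ineq}, Theorem~\ref{thm: l1_super}, Theorem~\ref{thm: l1_surface}, and the bound $\|n(x)\|_1 \le \sqrt n$ to get the polytope case, then pass to polyconvex sets by polytope approximation. The paper differs only in implementation details: it uses \emph{outer} approximations $P_i^k \supseteq C_i$ (needed so that a lemma of Meschenmoser--Spodarev guarantees Hausdorff convergence of all finite intersections $\cap_j P_{i_j}^k \to \cap_j C_{i_j}$, which then feeds the inclusion--exclusion expression for surface area as a valuation), and it handles the continuity hypothesis of Corollary~\ref{cor: I_1 single var ineq} not by perturbing the slice locations---which would alter the $\alpha^i_j$---but by choosing the approximating polytopes to have no facet parallel to any $e_i^\perp$, so that their marginals are automatically continuous at the given sample points.
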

\begin{proof}
Suppose that the $M_i$ hyperplanes parallel to $e_i^\perp$ are given by $(e_i^\perp + t_i^j)$ for $j \in [M_i]$. 
Let $K$ be a polyconvex set with a representation $K = \cup_{i=1}^m C_i$ where $C_i$ are compact, convex sets. For each $C_i$, we construct a sequence of convex polytopes $\{P^k_i\}$ which approximate $C_i$ from the outside. This means that $C_i \subseteq P^k_i$ for all $n \geq 1$ and $\lim_{k \to \infty} d(P^k_i, C_i) \to 0$, where $d$ is the Hausdorff metric. (This is easily achieved, for instance by sampling the support function of $C_i$ uniformly at random and constructing the corresponding polytope.) Consider the sequence of polytopes $P^k = \cup_{i=1}^m P^k_i$. For each $k$, we would like to assert that inequality \eqref{eq: poly} holds for the polytope $P^k$; i.e. we would like to lower bound $V_{n-1}(\partial P^k)$ using the slices of $P^k$ at $(e_i^\perp + t_i^j)$ for $i \in [n]$ and $j \in [M_i]$. The only difficulty in applying Corollary \ref{cor: I_1 single var ineq} to obtain such a lower bound on $V_{n-1}(\partial P^k)$ is the continuity assumption, which states that the marginal of the uniform density of $P_k$ on $e_i$, denoted by $f_{P^k \cdot e_i}$, should be be continuous at $t_i^j$ for all $i \in [n]$ and all $j \in [M_i]$. However, this is easily ensured by choosing an outer approximating polytope for $C_i$ that has no face parallel to $e_i^\perp$ for all $i \in [n]$. 

 To complete the proof for $K$, we need to show that $\lim_{k \to \infty} V_{n-1}(\partial P^k) = V_{n-1}(\partial K)$, and $\lim_{k \to \infty} V_{n-1}((e_i^\perp + t_i^j) \cap P^k) = V_{n-1}((e_i^\perp + t_i^j) \cap K)$ for any $i \in [n]$ and any $j \in [M_i]$. To show this, we use the following lemma \cite{MesSpo12}: 	

\begin{lemma}[Lemma 1 \cite{MesSpo12}]\label{lemma: intersect}
Let $K_1,\dots K_m \subseteq \real^n$ be compact sets. Let $\{K^k_i\}$, $k \geq 1$ be a sequence of
compact approximations converging to $K_i$ in Hausdorff distance, such that $K_i \subseteq K_i^n$ for
all $n \geq 1$ and for $i \in [m]$. Then it holds that
\begin{equation}
\lim_{k \to \infty} d\left(\cap_{i=1}^m K_i, \cap_{i=1}^m K^k_i \right) = 0.
\end{equation}
\end{lemma}
Using Lemma \ref{lemma: intersect}, we observe that for any collection of indices $1 \leq i_1 < \dots i_l \leq m$, 
we must have $d(P^k_{i_1} \cap \dots P^k_{i_l}, C_{i_1} \cap \dots \cap C_{i_l}) \to 0$ as $k \to \infty$. Since surface area is convex continuous with respect to the Hausdorff measure \cite{KlaRot97, Sch14}, we have the limit
\begin{equation}\label{eq: limit1}
\lim_{n\to \infty} V_{n-1}(\partial(P^k_{i_1} \cap \dots P^k_{i_l})) = V_{n-1}(\partial(C_{i_1} \cap \dots \cap C_{i_l})).
\end{equation} 
Moreover, surface area is a valuation on polyconvex sets \cite{KlaRot97, Sch14} and thus the surface area of a union of convex sets is obtained using the inclusion exclusion principle. In particular, the surface area of $K$ is
\begin{equation}\label{eq: limit2}
V_{n-1}(\partial K) = \sum_{i=1}^n V_{n-1}(\partial C_i) - \sum_{i_1 < i_2} V_{n-1}(\partial(C_{i_1} \cap C_{i_2})) + \dots + (-1)^{m+1} V_{n-1}(\partial(\cap_{i=1}^m  C_i)),
\end{equation}
and the surface area of $P^k$ is given by
\begin{equation}\label{eq: limit3}
V_{n-1}(\partial P^k) = \sum_{i=1}^n V_{n-1}(\partial P^k_i) - \sum_{i_1 < i_2} V_{n-1}(\partial(P^k_{i_1} \cap P^k_{i_2})) + \dots + (-1)^{m+1} V_{n-1}(\partial(\cap_{i=1}^m  P^k_i)).
\end{equation}
Using the limit in equation \eqref{eq: limit1}, we may conclude that every single term in \eqref{eq: limit3} converges to the corresponding term in \eqref{eq: limit2}, and so 
\begin{equation}
\lim_{k \to \infty} V_{n-1}(\partial P^k) = V_{n-1}(\partial K).
\end{equation}

We now show that each slice of $P^k$ converges in size to the corresponding slice of $K$. Let $H$ be some fixed hyperplane that is orthogonal to one of the coordinate axes. Since each $C_i$ can be replaced by a polytope $\cap_{k=1}^n  P^k_i$, we can assume without loss of generality that for each $i \in [m]$, the sequence of polytopes that approximate $C_i$ from outside is monotonically decreasing; i.e., $P^k_i \supseteq P^{k+1}_i$ for all $k \geq 1$.  For any fixed compact convex set $L \subseteq H$, Lemma \ref{lemma: intersect} yields
\begin{equation}\label{eq: LL}
d(L \cap P^k_i, L \cap C_i) \to 0, 
\end{equation}
and thus the $(n-1)$-dimensional volume of the two sets also converges. Picking $L$ to be $P^1_i \cap H$, we see that $L \cap P^k_i =  H  \cap P^k_i$, and $L \cap C_i$ is $H \cap C_i$, and thus equation \eqref{eq: LL} yields 
\begin{equation}
d(H  \cap P^k_i, H  \cap C_i) \to 0.
\end{equation}
The sequence of set $H \cap P^k_i$ for $n \geq 1$ is an outer approximation to $H \cap C_i$ that converges in the Hausdorff metric. Therefore, using Lemma \ref{lemma: intersect}, 
\begin{equation}
d((H \cap P^k_{i_1}) \cap \dots \cap (H \cap P^k_{i_l}), (H \cap C_{i_1}) \cap \dots \cap (H \cap C_{i_l})) \to 0.
\end{equation}
Using the continuity of the volume functional, 
\begin{equation}\label{eq: IE2}
V_{n-1}((H \cap P^k_{i_1}) \cap \dots \cap (H \cap P^k_{i_l})) \to V_{n-1}((H \cap C_{i_1}) \cap \dots \cap (H \cap C_{i_l})).
\end{equation}
Now an identical argument as above says that the $(n-1)$-dimensional volume of $H \cap K$ is obtained via an inclusion exclusion principle applied to the convex sets $H \cap C_i$ for $i \in [m]$. Applying equation \eqref{eq: IE2} to all the terms in the inclusion exclusion expression, we conclude that
\begin{equation}
V_{n-1}(H \cap P^k) \to V_{n-1}(H \cap K).
\end{equation}
This concludes the proof.
\end{proof}

Note that there is nothing restricting us to hyperplanes parallel to $e_i^\perp$. For example, suppose we have slice information available via hyperplanes parallel to $\{u_1^\perp, \dots, u_m^\perp\}$ for some unit vectors $u_i$ for $i \in [m]$. In this case, we have the inequality 
\begin{align*}
\frac{1}{V_n(K)}\int_{\partial K} \left(\sum_{j=1}^m \abs{n(x) \cdot u_j} \right) dS \geq \sum_{j=1}^m I_1(X \cdot u_j).
\end{align*}
Using the slice information, we may lower bound $I_1(X \cdot u_i)$ via Corollary \ref{cor: I_1 single var ineq}. Suppose this bound is $\frac{1}{V_n(K)}\sum_{j=1}^m B_j$. To arrive at a lower bound for the surface area, all we need is the best possible constant $C_n$ such that
\begin{align*}
C_n \geq \sum_{j=1}^m \abs{n(x) \cdot u_j}
\end{align*}
for all unit vectors $n(x)$. (This constant happened to be $\sqrt n$ when $u_j$'s were the coordinate vectors.) With such a constant, we may conclude
\begin{align*}
V_{n-1}(\partial K) \geq \frac{\sum_{j=1}^m B_j}{C_n}.
\end{align*}
In Appendix \ref{appendix: example}, we work out the surface area lower bound from Theorem \ref{thm: surface_area} for a particular example of a nonconvex (yet polyconvex) set.
\section{Conclusion}\label{section: end}

In this paper, we provided two different families of geometric inequalities to provide (a) Lower bounds on the volumes of convex sets using their slices, and (b) Lower bounds on the surface areas of polyconvex sets using their slices. These inequalities were derived using information theoretic tools. The volume bounds were obtained by using the Brascamp-Lieb subadditivity of entropy in conjunction with entropy bounds for log-concave random variables. Our main innovation in the surface area bounds is interpreting superadditivity of Fisher information as a consequence of the data-processing inequality applied to perturbed random variables. With this interpretation, we show that using the total variation distance for data-processing allows use to derive superadditivity results for the $L_1$-Fisher information. Crucially, the $L_1$-Fisher information is well-defined even for non-smooth densities, and thus we are able to calculate it for uniform distributions over compact sets.

There are a number of future directions worth pursuing. One interesting question is whether the volume bounds can be tightened further using entropy bounds for log-concave random variables that depend not just on the maximum value of the density, but also on the size of the support. Note that this means knowing the largest slices as well as the sizes of the projections of a convex set. Another interesting question is characterizing the equality cases of the superadditivity of Fisher information in Theorem \ref{thm: l1_super}, and thereby get a better understanding of when the resulting bounds provide meaningful estimates on the surface area of geometric body.

\bibliographystyle{unsrt}
\bibliography{Bibliography-Loomis_Whitney}

\begin{appendix}
\section{Proof of Lemma \ref{lemma: I_1 in N_i}}\label{proof: lemma: I_1 in N_i}

If $K \in \cP$, the assumption in Lemma \ref{lemma: I_1 in N_i} may be verified. Clearly, $K$ has finitely many faces $F_1, F_2, \dots, F_M$. For a line $\ell$ intersecting $K$ in some closed intervals, one of two events can happen. Either $\ell \cap F_j$ is one of the intervals, or the interval has endpoints that are marked by $\ell \cap F_{i_1}$ and $\ell \cap F_{i_2}$ for some $i_1, i_2 \in [M]$. The maximum number of intervals may be loosely bounded by $L \defn M+{M \choose 2}$, which is finite. 

	We show \eqref{eqn: I_1(X)_i in N_i} for $i=1$, and the others can be proved in the same way.  Since $X$ is uniformly distributed over $K$, 
	\[
		f_X(x) = \begin{cases}
			\frac{1}{V_n(K)} & \forall x \in K, \\
			0 & \text{otherwise.}
		\end{cases}
	\]
	Let 
	\[
		F(x_2, \ldots, x_n, \epsilon) = \int_{\mathbb{R}} \frac{|f_X(x_1, \ldots,x_n) - f_X(x_1 - \epsilon, \ldots,x_n)|}{\epsilon} dx_1.
	\]
	We claim that there exists $g(x_2, \ldots, x_n) \in L^1$, such that
	\[
		F(x_2, \ldots, x_n, \epsilon) \le g(x_2, \ldots, x_n).
	\]
	This would allow us to use the dominated convergence theorem to conclude 
	\[
		\lim_{\epsilon \rightarrow 0} \int_{\mathbb{R}^{n-1}} F(x_2, \ldots, x_n, \epsilon) dx_2 \ldots dx_n = \int_{\mathbb{R}^{n-1}} \lim_{\epsilon \rightarrow 0} F(x_2, \ldots, x_n, \epsilon) dx_2 \ldots dx_n
	\]
	Fix the coordinates $x_2, \ldots, x_n$.  If $(x_2, \ldots, x_n) \notin P_{e_1^\perp}(K)$, we clearly have $F(x_2, \ldots, x_n, \epsilon)= 0$.  Let $(x_2,\ldots,x_n) \in P_{e_1^\perp}(K)$.  Since $K$ intersects any straight line at most $L$ times, $f_X(x_1, \ldots, x_n)$ is a constant function on at most $L$ line segments and 0 else where.  We can write
it as $\sum_i f_i(x_1)$ where each $f_i$ is a constant function with value $\frac{1}{\vol_{n}(K)}$ on a small interval of $x_1$ and $0$ elsewhere.  Let $f_k$ be a function in this sum.  We consider $F(x_2,\ldots,x_n,\epsilon)$ in the following situations.
	\begin{enumerate}
		\item Support of $f_k$ is larger than or equal to $\epsilon$, then
		\begin{align*}
				 			 		\int_{\mathbb{R}} 
			 		 \frac{|f_k(x_1) - f_k(x_1-\epsilon)|}{\epsilon} dx 
			 		 = 2 \cdot \frac{1}{V_n(K)} \epsilon / \epsilon 
			 		= \frac{2}{V_n(K)}.
		\end{align*}
See Figure~\ref{fig: interval > eps}.
		\begin{figure}[ht]
			\centering
			\includegraphics[width=0.4\textwidth]{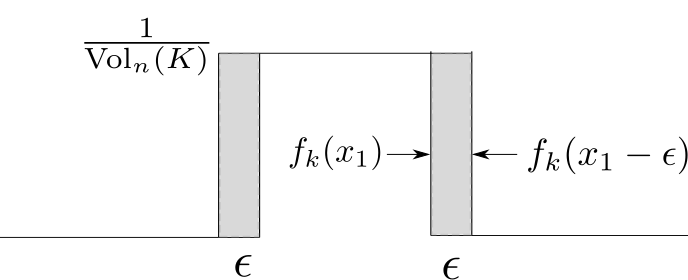}
			\caption{Support $> \epsilon$}
			\label{fig: interval > eps}
		\end{figure}
		\item Support of $f_k$ is $\epsilon'<\epsilon$, then
		\begin{align*}
\int_{\mathbb{R}} \frac{|f_k(x_1) - f_k(x_1-\epsilon)|}{\epsilon} dx_1 = 2 \cdot \frac{1}{V_n(K)} \epsilon' / \epsilon \le \frac{2}{V_n(K)}.
		\end{align*}
See Figure~\ref{fig: interval < eps}.
		\begin{figure}[ht]
			\centering
			\includegraphics[width=0.4 \textwidth]{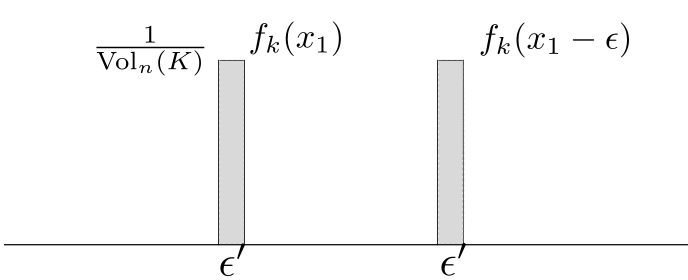}
			\caption{Support $< \epsilon$}
			\label{fig: interval < eps}
		\end{figure}

		In both cases, we have
		\[
				  \int_{\mathbb{R}} 
			 		 \frac{|f_k(x_1) - f_k(x_1-\epsilon)|}{\epsilon} dx_1 
			 		 \le \frac{2}{V_n(K)}.
		\]
	\end{enumerate}
	Therefore 
	\begin{align*}
\int_{\mathbb{R}} \frac{|f_X(x_1,\ldots,x_n) - f_X(x_1 - \epsilon,\ldots,x_n)|}{\epsilon} dx_1 
		 &= \int_{\mathbb{R}} \frac{|\sum_i f_i(x_1) - \sum_i f_i(x_1 - \epsilon)|}{\epsilon} dx_1 \\
		 &\le \int_{\mathbb{R}} \frac{\sum_i |f_i(x_1) - f_i(x_1 -\epsilon)|}{\epsilon} dx_1 \\
		 &\le \sum_i \frac{2}{V_n(K)} \\
		 &\le \frac{2L}{V_n(K)}.
	\end{align*}
	Let 
	\[
		g(x_2, \ldots, x_n) = \begin{cases}
			\frac{2L}{V_n(K)} & (x_2, \ldots, x_n) \in P_{e_1^\perp}(K), \\
			0 & \text{ otherwise. }
		\end{cases}
	\]
	Then $F(x_2, \ldots, x_n, \epsilon) \le g(x_2, \ldots, x_n)$, and
	\begin{align*}
		\int_{\mathbb{R}^{n-1}} g(x_2,\ldots,x_n) dx_2\ldots dx_n &= \int_{P_{e_1^\perp}(K)} \frac{2L}{V_n(K)}  dx_2 \ldots dx_n \\
		&= \frac{2L V_{n-1}(P_{e_1^\perp}(K))}{V_n(K)},
	\end{align*}
	which shows that $g$ is integrable. Using the dominated convergence theorem, we know that 
	\begin{align*}
		&\lim_{\epsilon \rightarrow 0^+} \int_{\mathbb{R}^n} \frac{|f_X(x_1,\ldots,x_n) - f_X(x_1 - \epsilon,\ldots,x_n)|}{\epsilon} dx_1dx_2\dots dx_n\\
		&= \lim_{\epsilon\to 0^+}\int_{\real^{n-1}}  F(x_2, \dots, x_n, \epsilon) dx_2 \dots dx_n\\
		&= \int_{\real^{n-1}} \lim_{\epsilon \to 0^+} F(x_2, \dots, x_n, \epsilon) dx_2 \dots dx_n\\
		& = \int_{\mathbb{R}^{n-1}} dx_2 \ldots dx_n\lim_{\epsilon \rightarrow 0^+} \int_{\mathbb{R}}\frac{|f_X(x_1,\ldots,x_n) - f_X(x_1 - \epsilon,\ldots,x_n)|}{\epsilon} dx_1. 
	\end{align*}
	Lastly, by Corollary \ref{cor: step function},
	 \[
		\lim_{\epsilon \rightarrow 0^+} \int_{\mathbb{R}}\frac{|f_X(x_1,\ldots,x_n) - f_X(x_1 - \epsilon,\ldots,x_n)|}{\epsilon} dx_1 = \frac{2N(x_2,\ldots,x_n)}{V_n(K)}.
	\]
	This concludes the proof.
\section{Proof of Lemma \ref{cor: I_1 as surface projection}}\label{proof: cor: I_1 as surface projection}
Recall that we need to show
\begin{equation}\label{eq: polytope_integral}
	\frac{1}{V_n(K)}\int_{\partial K} |n(x) \cdot {e_i}| dS = \frac{1}{V_n(K)} \int_{\mathbb{R}^{n-1}} 2N(x_2,\ldots,x_n) dx_2\ldots dx_n.
\end{equation}
Without loss of generality, let $i=1$. Denote $\partial K$ as $\cup_{j=1}^M F_j$ where $F_j$ are the faces of $K$. Let $n_j$ be the outward normal to $F_j$ for $j \in [m]$.  We have the equality
\begin{align*}
	V_{n-1}(P_{e_1^\perp}(F_i)) =  \int_{P_{e_1^\perp}(F_i)} dx_2dx_3\dots dx_n =  |n_i \cdot e_1| V_{n-1}(F_i). 
\end{align*}
Summing up for all $F_i$, the left hand side of \eqref{eq: polytope_integral} is given by
\begin{align*}
\frac{1}{V_n(K)}\int_{\partial K} |n(x) \cdot {e_1}| dS &= \frac{1}{V_n(K)}\sum_{i=1}^M \abs{n_i \cdot e_1} V_{n-1}(F_i)\\
&= \sum_{i=1}^M \frac{1}{V_n(K)}\int_{P_{e_1^\perp} (F_i)} dx_2dx_3 \dots dx_n.
\end{align*}
If for some $n_j$, the equality $n_j \cdot e_1 = 0$ holds, then $V_{n-1}(P_{e_1^\perp(F_j)}) = 0$.  Clearly,
\begin{align*}
 	\int_{P_{e_1^\perp}(F_i)} dx_2 dx_3 \ldots dx_n = \int_{P_{e_1^\perp}(F_i)} 2N(x_2, \ldots, x_n ) = 0.
 \end{align*} 
Without loss of generality, we assume $n_j \cdot e_1 \neq 0$. Let $\delta_{P_{e_1^\perp}(F_i)}(x_2,\ldots,x_n)$ be the indicator function on $P_{e_1^\perp}(F_i)$; i.e.,
\begin{align*}
	\delta_{P_{e_1^\perp}(F_i)}(x_2,\ldots,x_n) = \begin{cases}
		1 & (x_2,\ldots,x_n) \in P_{e_1^\perp}(F_i), \\
		0 & \text{otherwise.}
	\end{cases}
\end{align*}
Then  
\begin{align*}
\sum_{i=1}^M \int_{P_{e_1^\perp}(F_i)} dx_2\ldots dx_n 
	=& \sum_{i=1}^{M} \int_{\mathbb{R}^{n-1}} \delta_{P_{e_1^\perp}(F_i)}(x_2,\ldots,x_n) dx_2 \ldots dx_n \\
	=& \int_{\mathbb{R}^{n-1}} \sum_{i=1}^{M} \delta_{P_{e_1^\perp}(F_i)}(x_2,\ldots,x_n) dx_2 \ldots dx_n.
\end{align*}
For every $(x_2,\ldots,x_n)$, there will be $2N(x_2,\ldots,x_n)$ many $F_i$'s such that $(x_2,\ldots,x_n) \in F_i$. Therefore
\begin{align*}
	 \int_{\mathbb{R}^{n-1}} \sum_{i=1}^{k} \delta_{P_{e_1^\perp}(U_i)}(x_2,\ldots,x_n) dx_2 \ldots dx_n 
	= \int_{\mathbb{R}^{n-1}} 2N(x_2, \ldots, x_n) dx_2 \ldots dx_n,
\end{align*}
which completes the proof.

\section{Proof of Lemma \ref{thm: I_1 single variable}}\label{proof: thm: I_1 single variable}

We claim that 
	\begin{align}
		\lim_{\epsilon \rightarrow 0^+} \int_{a_i + \epsilon}^{a_{i+1}} \frac{|f_X(x) - f_X(x- \epsilon)|}{\epsilon} dx &= |f_X({a_{i+1}}^-) - f_X(a_i^+)| , \quad (i=0, \ldots, M) \label{eqn: a_i + eps -> a_i+1}, \text{    and }\\
		\lim_{\epsilon \rightarrow 0^+} \int_{a_i}^{a_i + \epsilon} \frac{|f_X(x) - f_X(x- \epsilon)|}{\epsilon}  dx &= |f_X(a_i^+) - f_X(a_i^-)|, \quad (i=1, \ldots, M). \label{eqn: a_i -> a_i + eps}
	\end{align}
	If $f_X$ is increasing on $(a_i,{a_{i+1}})$, then
	\begin{align*}
		\lim_{\epsilon \rightarrow 0^+} \int_{a_i+\epsilon}^{a_{i+1}} \frac{|f_X(x) - f_X(x- \epsilon)|}{\epsilon} dx &= \lim_{\epsilon \rightarrow 0^+}  \int_{a_i + \epsilon}^{a_{i+1}} \frac{f_X(x) - f_X(x-\epsilon)}{\epsilon} dx \\
		&= \lim_{\epsilon \rightarrow 0^+}  \int_{a_i + \epsilon}^{a_{i+1}} \frac{f_X(x)}{\epsilon} dx -  \int_{a_i + \epsilon}^{a_{i+1}} \frac{f_X(x-\epsilon)}{\epsilon} dx \\
		&= \lim_{\epsilon \rightarrow 0^+} \int_{a_i+\epsilon}^{a_{i+1}} \frac{f_X(x)}{\epsilon} dx + \int_{a_i}^{{a_{i+1}}-\epsilon} \frac{f_X(x)}{\epsilon} dx\\
		&= \lim_{\epsilon \rightarrow 0^+} - \int_{a_i}^{a_i+\epsilon} \frac{f_X(x)}{\epsilon} dx + \int_{{a_{i+1}} - \epsilon}^{a_{i+1}} \frac{f_X(x)}{\epsilon}  dx\\
		&= -f_X(a_i^+) + f_X({a_{i+1}}^-).
	\end{align*}
The last equality is true since $ \int_{a_i}^{a_i+\epsilon} \frac{f_X(x)}{\epsilon} = f_X(\theta)$ for $a_i <\theta_i < a_i + \epsilon$ due to mean value theorem.  This value approaches $f_X(a_i^+)$ when $\epsilon \rightarrow 0^+$.  Using the same argument, we can show $ \lim_{\epsilon \rightarrow 0^+}\int_{{a_{i+1}}- \epsilon}^{a_{i+1}} \frac{f_X(x)}{\epsilon} dx = f_X({a_{i+1}}^-)$ .	Similarly, when $f_X$ is decreasing on $(a,{a_{i+1}})$, we have 
	\[
		\lim_{\epsilon \rightarrow 0^+} \int_{a_i+\epsilon}^{a_{i+1}} \frac{|f_X(x) - f_X(x- \epsilon)|}{\epsilon} dx = |f_X(a_i^+) - f_X({a_{i+1}}^-)|
	\]
Therefore we have established \eqref{eqn: a_i + eps -> a_i+1}.
Similarly, $\int_{a_i}^{a_i+\epsilon} \frac{|f_X(x)- f_X(x- \epsilon)|}{\epsilon} = |f_X(\theta') - f_X(\theta'- \epsilon)|$ for $ a<\theta' < a + \epsilon $.  Since $a_i - \epsilon< \theta' - \epsilon< a_i $, this approaches $|f_X(a_i^-) - f_X(a_i^+)$ as $\epsilon \rightarrow 0^+$.  So we have also established \eqref{eqn: a_i -> a_i + eps}.
	Lastly,
	\begin{align*}
		I_1(X) &= 	 \lim_{\epsilon \rightarrow 0^+} 
		 		\int_{\mathbb{R}} \frac{|f_X(x) - f_X(x-\epsilon)|}{\epsilon} dx \\
		&= \lim_{\epsilon \rightarrow 0^+} \sum_{i=0}^{M} \int_{a_i + \epsilon}^{a_{i+1}}\frac{|f_X(x) - f_X(x- \epsilon)|}{\epsilon} dx + \sum_{i=1}^{M} \int_{a_i}^{a_i + \epsilon} \frac{|f_X(x) - f_X(x-\epsilon)|}{\epsilon} dx \\
		&= \sum_{i=0}^M |f_X(a_{i+1}^-) - f_X(a_i^+)| + \sum_{i=1}^{M} |f_X(a_i^+) - f_X(a_i^-)|.
	\end{align*}

\section{Proof of Lemma \ref{lemma: verify}}\label{proof: lemma: verify}

Without loss of generality, assume $u = e_1$. Let $K = \cup_{i=1}^m P_i$ where $P_i$ are compact, convex polytopes. Denote the projection of $f_X(\cdot)$ restricted to some set $C \subseteq K$ on the $e_1$ axis by $f_{C \cdot e_1}(\cdot)$. If $C$ is a convex polytope, we may verify that $f_{C \cdot e_1}$ is log concave, and therefore a continuous function on some closed interval. Furthermore, if $C$ is a convex and compact polytope, then we may triangulate $C$; i.e., express $C = \cup_{i=1}^r T_i$ where $T_i$ are $n$-dimensional compact simplices for $i \in [r]$ such that their interiors partition the interior of $C$. Then $f_{C \cdot e_1} = \sum_{i=1}^r f_{T_i \cdot e_1}$. Each function in the summation is a degree $(n-1)$ polynomial with a compact interval as its support in $\real$ \cite{Las15}. Thus, $f_{C \cdot e_1}$ is a continuous function consisting of finitely many pieces such that $f_{C \cdot e_1}$ restricted to each piece is a polynomial of degree $(n-1)$. Note that the overall density $f_{K \cdot e_1} \defn f_{X\cdot e_1}(\cdot)$ is given via the inclusion exclusion principle by
\begin{align*}
f_{K \cdot e_1} = \sum_{i=1}^m f_{P_i} - \sum_{i_1<i_2} f_{P_{i_1} \cap P_{i_2}} + \sum_{i_1<i_2<i_3} f_{P_{i_1} \cap P_{i_2} \cap P_{i_3}} + \dots
\end{align*}
For each collection of indices $i_1, \dots, i_k$, we have that $\cap_{j=1}^k P_{i_j}$ is a compact, convex polytope, possibly with 0 volume, but such sets do not contribute to the above sum so we only consider  cases where the intersection has a positive volume. The sum (or difference) of finitely many bounded continuous functions on closed intervals is easily to satisfy the following property: We may find finitely many points $-\infty =  \gamma_0 < \gamma_1 < \dots < \gamma_R < \gamma_{R+1} = + \infty$ such that the function is continuous on each open interval $(\gamma_i, \gamma_{i+1})$, and the left and right limits at the endpoints in each interval are finite. To verify the assumptions in Lemma \ref{thm: I_1 single variable}, we simply check that on each interval $(\gamma_i, \gamma_{i+1})$, the function does not have infinitely local optima. This is clearly true since restricted to $(\gamma_i, \gamma_{i+1})$, the function is a piecewise polynomial of degree $(n-1)$. This proves the claim. 

\begin{remark}
Note that in general, it is possible for the difference of log concave functions to have infinitely many local optima.  For example, if $f_1, f_2: [-1,1] \to \real$ such that $f_1(x) = 2-x^2$ and $f_2(x) = (2-x^2) + \frac{e^{-1/x^2}\sin(1/x)}{2}$, then both functions are concave and positive, and therefore log-concave. However, $f_2-f_1 = \frac{e^{-1/x^2}\sin(1/x)}{2}$ has infinitely many local optima close to 0.  The observation that the marginals in our case are piecewise polynomials is therefore necessary in the above argument.
\end{remark}

\section{Example}\label{appendix: example}

\begin{figure}[ht]
	\centering
	\includegraphics[width=0.4 \textwidth]{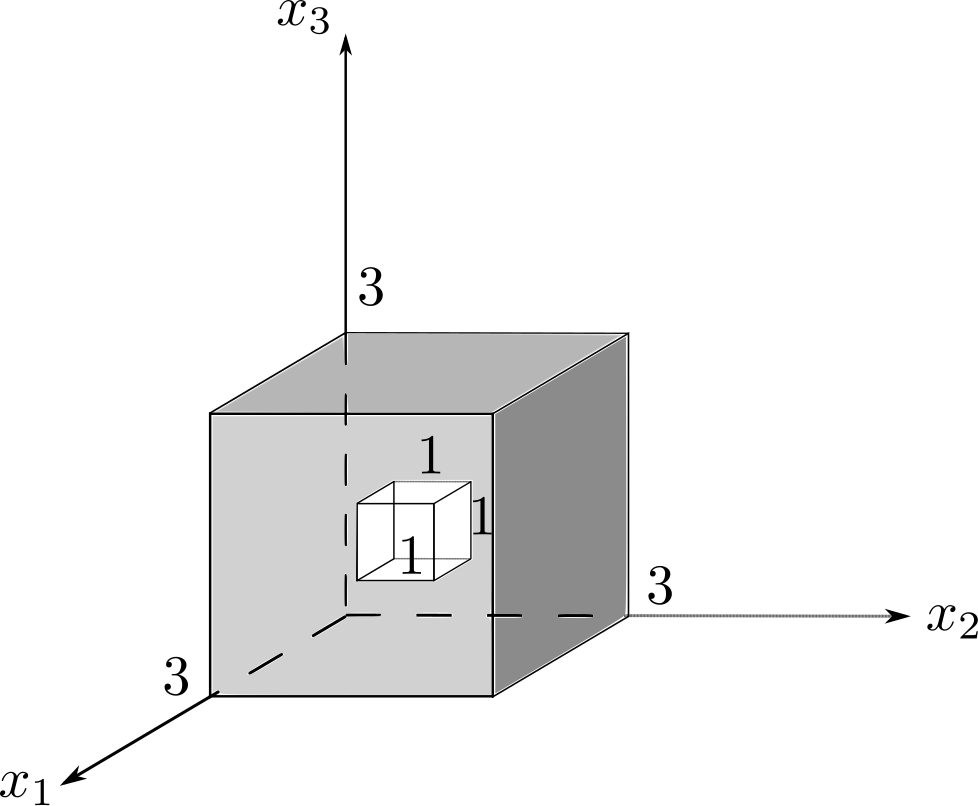}
	\caption{Cube with a hole}
	\label{fig: cube with a hole}
\end{figure}

To illustrate \eqref{eqn: surface area > I_1}, we consider the example of a cube with a hole in it as in Figure \ref{fig: cube with a hole}. Note that this is a nonconvex set, but is easily seen to be polyconvex. The density $f_{X_1}(x_1)$ can be computed using area of slices along $x_1$-axis, and
\begin{align*}
 	I_1(X_1) &= |f_{X_1}(0) - f_{X_1}(-\infty)| + 
 				|f_{X_1}(1) - f_{X_1}(0)| +
 				|f_{X_1}(3) - f_{X_1}(2)| +
 				|f_{X_1}(\infty) - f_{X_1}(3)| \\
 			&=  | \frac{9}{26} - 0 | + 
 				| \frac{8}{26} -\frac{9}{26}| + 
 				| \frac{9}{26} - \frac{8}{26}| + 
 				|0 - \frac{9}{26}| \\
 			&= \frac{20}{26}.
 \end{align*} 
 By symmetry,
 \begin{align*}
 	\frac{1}{\sqrt{3}} (I_1(X_1) + I_1(X_2) + I_1(X_3)) = \frac{60}{26\sqrt{3}} \approx 1.33,
 \end{align*}
By direct calculation, 
\begin{equation*}
	\frac{V_{n-1}(\partial K)}{V_n(K)} = \frac{48}{26} \approx 1.85.
\end{equation*}

\end{appendix}


\end{document}